\documentclass{amsart}

\usepackage{amsthm}
\usepackage{amsfonts}
\usepackage{amsmath}
\usepackage{amssymb}
\usepackage{sherstov}
   \usepackage{mathptmx}
   \newcommand{\ccases}[1]{\begin{cases}#1\end{cases}}
   \newcommand{\PARENS}[1]{\left(#1\right)}
   \renewcommand{\BRACES}[1]{\left\{#1\right\}}

   \newcommand{\g}{g}

\renewcommand{\mathcal}[1]{\mathscr{#1}}

\usepackage{small-caption}
\usepackage{url}

\sloppy
\renewcommand{\emptyset}{\varnothing}

\renewcommand{\g}{g}

\begin{document}

\title[Sign-Representing the Intersection of Two Halfspaces by Polynomials]
{Optimal Bounds for Sign-Representing the Intersection 
\\of Two Halfspaces by Polynomials}

\author[A. A. Sherstov]{Alexander A. Sherstov$^*$}
\date{}
\thanks{$^*$ Microsoft Research, Cambridge, MA 02142.
\; Email:
\url{sherstov@cs.utexas.edu}.}

\maketitle

\renewcommand{\baselinestretch}{1.02}
    \normalsize

\begin{abstract}
The \emph{threshold degree} of a function
$f\colon\zoon\to\moo$ is the least degree of a real
polynomial $p$ with $f(x)\equiv\sign p(x).$ We prove
that the intersection of two halfspaces on $\zoon$ has
threshold degree $\Omega(n),$ which matches the trivial
upper bound and completely answers a question due to
Klivans (2002).  The best previous lower bound was
$\Omega(\sqrt n).$ Our result shows that the
intersection of two halfspaces on $\zoon$ only admits a
trivial $2^{\Theta(n)}$-time learning algorithm based
on sign-representation by polynomials, unlike the
advances achieved in PAC learning DNF formulas and
read-once Boolean formulas.
The proof introduces a new technique of independent
interest, based on Fourier analysis and matrix theory.
\end{abstract}

\thispagestyle{empty}



\section{Introduction}
\label{sec:intro}

A well-studied notion in computational learning
theory is that of a \emph{perceptron}. This term stands
for the representation of a given Boolean function
$f\colon\zoon\to\moo$ in the form
$
f(x) \equiv \sign p(x)
$
for a real polynomial $p$ of some degree $d.$ The least
degree $d$ for which $f$ admits such a representation
is called the \emph{threshold degree} of $f,$ denoted
$\degthr(f).$ In other words, $\degthr(f)$ is the least
degree of a real polynomial that agrees with $f$ in
sign.  Perceptrons are appealing from a learning
standpoint because they immediately lead to efficient
learning algorithms. In more detail, let
$f\colon\zoon\to\moo$ be an unknown function of
threshold degree $d.$ Then by definition, $f$ has a
representation of the form
\begin{align*}
f(x) \equiv \sign \PARENS{
\sum_{\substack{\abs{S}\leq d}}
^{\phantom{\abs{S}\leq d}}
\lambda_S \prod_{i\in S} x_i} 
\end{align*}
for some reals $\lambda_S$ and is thus a halfspace in 
$N={n\choose 0}+{n\choose 1}+\cdots+{n\choose d}$
dimensions.  As a result, $f$ can be PAC learned in
time polynomial in $N,$ using any of a variety of
halfspace learning algorithms.  (Throughout this paper,
the term ``PAC learning'' refers to Valiant's standard
model~\cite{valiant84pac} of learning under arbitrary
distributions.)

The study of perceptrons dates back forty years to the
seminal monograph of Minsky and
Papert~\cite{minsky88perceptrons}, who examined the
threshold degree of several common functions.  Today,
the perceptron-based approach yields the fastest known
PAC learning algorithms for several concept classes.
One such is the class of DNF formulas of polynomial
size, posed a challenge in Valiant's original
paper~\cite{valiant84pac} and extensively studied over
the past two decades.  
The fastest known algorithm for PAC learning DNF formulas
runs in time $\exp\{\tilde O(n^{1/3})\}$ and is due to
Klivans and Servedio~\cite{KS01dnf}. Specifically, the
authors of~\cite{KS01dnf} prove an upper bound of
$O(n^{1/3}\log n)$ on the threshold degree of
polynomial-size DNF formulas, which essentially matches
a classical lower bound of $\Omega(n^{1/3})$ due to
Minsky and Papert~\cite{minsky88perceptrons}.

Another success story of the perceptron-based approach
is the concept class of Boolean formulas, i.e., Boolean
circuits with fan-out $1$ at every gate. O'Donnell and
Servedio~\cite{odonnell03degree} proved an upper bound
of $\sqrt s\log^{O(d)} s$ on the threshold degree of
Boolean formulas of size $s$ and depth $d,$ giving the
first subexponential algorithm for a family of formulas
of superconstant depth.  This upper bound on the
threshold degree was improved to $s^{0.5+o(1)}$ for
any depth $d$ by Ambainis et al.~\cite{ACRSZ07nand},
building on a quantum query algorithm of Farhi et
al.~\cite{FGG08nand}. More recently,
Lee~\cite{lee09formulas} sharpened the upper bound to
$O(\sqrt s),$ which is tight.  This line of research
gives the fastest known algorithm for PAC learning
Boolean formulas.

Another extensively studied problem in computational
learning theory, and the subject of this paper, is the
problem of learning \emph{intersections of halfspaces},
i.e., conjunctions of functions of the form
$f(x)=\sign(\sum \alpha_ix_i - \theta)$ for some reals
$\alpha_1,\dots,\alpha_n,\theta.$ While solutions are
known to several restrictions of this problem~\cite{
blum-kannan97intersection-of-halfspaces,   
KwekPitt:98,                               
Vempala:97,                                
arriaga98proj,                             
KOS:02,                                    
KlivansServedio:04coltmargin,              
KLT09intersections-of-halfspaces},         
no algorithm has been discovered for PAC learning the
intersection of even two halfspaces in time faster than
$2^{\Theta(n)}.$ Progress on proving hardness results
has also been scarce. Indeed, all known hardness
results~\cite{blum92trainingNN, ABFKP:04,
focs06hardness, khot-saket08hs-and-hs} either require
polynomially many halfspaces or assume \emph{proper}
learning.  In particular, we are not aware of any
representation-independent hardness results for PAC
learning the intersection of $O(1)$ halfspaces. 

\subsection*{Our Results}

Since the perceptron-based approach yields the fastest
known algorithms for PAC learning DNF formulas and
read-once Boolean formulas, it is natural to wonder
whether it can yield any nontrivial results for the
intersection of two halfspaces. Letting $D(n)$ stand
for the maximum threshold degree over all intersections
of two halfspaces on $\zoon,$ the question becomes
whether $D(n)$ is a nontrivial (sublinear) function of
the dimension $n.$ This question has been studied by
several authors, as summarized in
Table~\ref{tab:history}. Forty years ago, Minsky and
Papert~\cite{minsky88perceptrons} used a compactness
argument to show that $D(n)=\omega(1),$ the function in
question being the intersection of two majorities on
disjoint sets variables.  O'Donnell and
Servedio~\cite{odonnell03degree} studied the same
function using a rather different approach and thereby
proved that $D(n)=\Omega(\log n/\log\log n).$ No
nontrivial upper bounds on $D(n)$ being known,
Klivans~\cite[\S7]{klivans-thesis} formally posed the
problem of proving a lower bound substantially better
than $\Omega(\log n)$ or an upper bound of $o(n).$

\begin{table}[b!]
\begin{center}
\begin{tabular}{l@{\hspace{5mm}}@{\hspace{7mm}}l}
\textit{Result} & \textit{Reference} \\
\hline
$D(n)=\omega(1)$ 
    & \cite{minsky88perceptrons}\\
$D(n)=\Omega(\log n/\log\log n)$ 
    & \cite{odonnell03degree}\\
$D(n)=\Omega(\sqrt n)$ 
    & \cite{sherstov09hshs}\\
$D(n)=\Theta(n)$ 
    & this paper
\end{tabular}
\end{center}
\caption{Lower bounds for the intersection of two
halfspaces.}
\label{tab:history}
\end{table}

It was recently shown in~\cite{sherstov09hshs} that
$D(n)=\Omega(\sqrt n),$ solving Klivans' problem and
ruling out an $n^{o(\sqrt n)}$-time PAC learning
algorithm based on perceptrons. It is clear, however,
that a PAC learning algorithm for the intersection of
two halfspaces in time $n^{\Theta(\sqrt n)}$ would
still be a breakthrough in computational learning
theory, comparable to the advances in the study of DNF
formulas and read-once Boolean formulas.  The main
contribution of this paper is to prove that
$D(n)=\Omega(n),$ which matches the trivial upper bound
and definitively rules out the perceptron-based
approach for learning the intersection of two
halfspaces in nontrivial time.

\begin{maintheorem}[Main result]
\label{thm:main}
For $n=1,2,3,\dots,$ let $D(n)$ denote the maximum threshold
degree of a function of the form $f(x)\wedge g(x),$ where
$f,g\colon\zoon\to\moo$ are halfspaces. Then
\begin{align*}
D(n) = \Theta(n).
\end{align*}
\end{maintheorem}

To be more precise, we give a randomized algorithm which 
with probability at least \mbox{$1-\e^{-n/12}$} constructs 
two halfspaces on $\zoon$ whose intersection has 
threshold degree $\Theta(n).$
In Section~\ref{sec:main}, we develop several
refinements of Theorem~\ref{thm:main}.
For example, we show that the intersection of two
halfspaces on $\zoon$ requires a perceptron with
$\exp\{\Theta(n)\}$ monomials, i.e., does not have a
sparse sign-representation. We also give an essentially
tight lower bound on the threshold degree of the
intersection of a halfspace and a majority function,
improving quadratically on the previous bound
in~\cite{sherstov09hshs}.

In summary, unlike DNF formulas and read-once Boolean
formulas,
the intersection of two halfspaces does not admit a
nontrivial sign-representation.  Apart from
computational learning theory, lower bounds on the
threshold degree have played a key role in several
works on circuit
complexity~\cite{paturi-saks94rational,
siu-roy-kailath94rational,
krause94depth2mod,         
KP98threshold,
sherstov07ac-majmaj},
Turing complexity
classes~\cite{aspnes91voting, 
beigel91rational,             
beigel94perceptrons},         
and communication complexity~\cite{sherstov07ac-majmaj,
sherstov07quantum,
sherstov07symm-sign-rank,
RS07dc-dnf}.
For this reason, we consider Theorem~\ref{thm:main}
and the techniques used to obtain it to be of 
interest outside of computational learning.

Theorem~\ref{thm:main} and much previous work suggest
that the nature of a PAC learning problem changes
significantly when, instead of Valiant's original
arbitrary-distribution setting, one considers learning
with respect to restricted distributions. For example,
the uniform distribution on the sphere $\mathbb{S}^{n-1}$ 
or hypercube $\zoon$ allows the use of 
tools other than sign-representing polynomials, such as
Fourier analysis.  In particular,
polynomial-time algorithms are known for the
uniform-distribution learning of intersections of a
constant number of halfspaces on the sphere~\cite{
blum-kannan97intersection-of-halfspaces,   
Vempala:97}                                
and hypercube~\cite{KOS:02}. Furthermore, if
membership queries are allowed, DNF formulas are known
to be learnable in polynomial time with respect to the
uniform distribution on the
hypercube~\cite{jackson94sieve}.

\subsection*{Our Techniques}

Let $f\wedge f$ denote the conjunction of two copies of
a given Boolean function $f,$ each on an independent
set of variables.  It was shown
in~\cite{sherstov09hshs} that the threshold degree of
$f\wedge f$ equals, up to a small multiplicative
constant, the least degree of a rational function $R$
with $\|f-R\|_\infty\leq 1/3.$ With this
characterization in hand, the equality $\degthr(f\wedge
f)=\Theta(\sqrt n)$ was derived
in~\cite{sherstov09hshs} by solving the rational
approximation problem for the halfspace 
\begin{align*}
f(x) = \sign\PARENS{1 +
\sum_{i=1}^{\sqrt n} \sum_{j=1}^{\sqrt n} 2^i x_{ij}}.
\end{align*}
Unfortunately, the $\Theta(\sqrt n)$ barrier is fundamental to
the analysis in~\cite{sherstov09hshs}. To prove that in fact
$D(n)=\Theta(n),$ we pursue a rather different approach.

The intuition behind our work is as follows. Let
$\alpha_1,\alpha_2,\dots,\alpha_n$ be given nonzero
integers, and let $f\colon\zoon\to\moo$ be a given
Boolean function such that $f(x)$ is completely
determined by the sum $\sum \alpha_ix_i.$ When
approximating $f$ pointwise by polynomials and rational
functions of a given degree, can one restrict attention
to those approximants that are, like $f,$ functions of
the sum $\sum\alpha_ix_i$ alone rather than the
individual bits $x_1,x_2,\dots,x_n$?  If true, this
claim would dramatically simplify the analysis of the
threshold degree of $f$ by reducing it to a univariate
question.  Minsky and Papert~\cite{minsky88perceptrons}
showed that the claim is
indeed true in the highly special case
$\alpha_1=\alpha_2=\cdots=\alpha_n.$ For the purposes
of this paper, however, the nonzero coefficients
$\alpha_1,\alpha_2,\dots,\alpha_n$ must be of
increasing orders of magnitude and in particular must
satisfy
\begin{align*}
\max_{i,j} \ABS{\frac{\alpha_i}{\alpha_j}}>\exp\{\Omega(n)\}.
\end{align*}
Minsky and Papert's argument breaks down completely in
this setting, and with good reason: coefficients
$\alpha_1,\dots,\alpha_n$ are easily
constructed~\cite{beigel94perceptrons} for which the
passage to univariate approximation increases the
degree requirement from $1$~to~$n.$

To overcome this difficulty, we use techniques from
Fourier analysis and matrix perturbation theory.
Specifically, we define an appropriate distribution on
$n$-tuples $(\alpha_1,\dots,\alpha_n)$ and study the
behavior of the sum $\sum\alpha_ix_i$ as the vector $x$
ranges over $\zoon.$ We prove that for a typical
$n$-tuple $(\alpha_1,\dots,\alpha_n$) and any
collection of sums $S\subset\ZZ$ of interest, the subset
$X_S\subset\zoon$ that induces the sums in $S$ is
highly random in that membership in $X_S$ is
uncorrelated with any polynomial of degree up to
$\Theta(n).$ With some additional work, this allows the
sought passage to a univariate question.  In
particular, we are able to prove the existence of a
halfspace $f\colon\zoon\to\moo$ such that any
multivariate rational approximant for $f$ gives a
univariate rational approximant for the sign function on
$\{\pm1,\pm2,\pm3,\dots,\pm2^{\Theta(n)}\}$ with the
same degree and error. The univariate question being
well-understood, we infer that $f$ requires a rational
function of degree $\Omega(n)$ for pointwise
approximation within $1/3$ and hence $\degthr(f\wedge
f)\geq\Omega(n)$ by the characterization
from~\cite{sherstov09hshs}.


\section{Preliminaries}
\label{sec:prelim}

\subsection*{Notation.}
We will view Boolean functions as mappings $X\to\zoo$
or $X\to\moo$ for some finite set $X,$ where the output
value~$1$ corresponds to ``true'' in the former case
and ``false'' in the latter. We adopt the following
standard definition of the sign function:
\begin{align*}
  \sign x = 
  \ccases{
  -1, &x<0, \\
  0,  &x=0, \\
  1, &x>0.
  }
\end{align*}
The complement of a set $S$ is denoted $\overline S.$
We denote the symmetric difference of sets $S$ and $T$
by $S\oplus T = (S\cap \overline T) \cup (\overline S
\cap T).$ For a finite set $X,$ the symbol $\Pcal(X)$
denotes the family of all $2^{\abs{X}}$ subsets of $X.$
For functions $f,g\colon X\to\Re$ on a finite set $X,$
we use the notation
\begin{align*}
\langle f, g\rangle = \frac1{\abs X} \sum_{x\in X} f(x) g(x).
\end{align*}
We let $\log x$ stand for the logarithm of $x$ to the
base $2.$ The binary entropy function
$H\colon[0,1]\to[0,1]$ is given by $H(p) = -p\log p -
(1-p)\log (1-p)$ and is strictly increasing on
$[0,1/2].$ The following bound is well
known~\cite[p.~283]{jukna01extremal}:
\begin{align}
\sum_{i=0}^k {n\choose i} \leq 2^{H(k/n)n}, &&
k=0,1,2,\dots,\lfloor n/2\rfloor.
\label{eqn:entropy-bound}
\end{align}
For elements $x,y$ of a given set, we use the Kronecker
delta 
\begin{align*}
\delta_{x,y}
=\ccases{ 1, & x=y, \\
          0, &x\ne y.
}
\end{align*}
The symbol $P_k$ stands
for the family of all univariate real polynomials of
degree up to $k.$ The majority function
$\MAJ_n\colon\zoon\to\moo$ has the usual definition:
\begin{align*}
\MAJ_n(x)
=\ccases{-1,     & x_1+x_2+\cdots+x_n > n/2, \\
         1,     & \text{otherwise}.}
\end{align*}

\subsection*{Fourier transform.}
Consider the vector space of functions $\zoon\to\Re,$ equipped
with the inner product
\begin{align*}
\langle f,g\rangle = 2^{-n} \sum_{x\in\zoon}f(x)g(x).
\end{align*}
For $S\subseteq\oneton,$ define $\chi_S\colon \zoon\to\moo$ by
$\chi_S(x) =(-1)^{\sum_{i\in S} x_i}.$
Then $\{\chi_S\}_{S\subseteq\oneton}$ is an orthonormal
basis for the inner product space in question.  As a
result, every function $f\colon \zoon\to\Re$ has a
unique representation of the form
\begin{align*}
f=\sum_{S\subseteq\oneton} \hat f(S)\chi_S,
\end{align*} where $\hat
f(S)=\langle f,\chi_S\rangle$. The reals $\hat f(S)$ are
called the \emph{Fourier coefficients of $f.$}
The orthonormality of $\{\chi_S\}$ immediately yields
\emph{Parseval's identity}:
\begin{align}
  \sum_{S\subseteq\oneton} \hat f(S)^2 
    = \langle f,f\rangle = \Exp_{x\in\zoon}[f(x)^2]. \label{eqn:parsevals}
\end{align}

\subsection*{Matrices.}
The symbol $\Re^{m\times n}$ refers to
the family of all $m\times n$ matrices with real entries.
A matrix $A\in\Re^{n\times n}$ is called \emph{strictly
diagonally dominant} if 
\begin{align*}
\abs{A_{ii}} > 
\sum_{\substack{j=1\\j\ne i}}^n \abs{A_{ij}}, &&i=1,2,\dots,n.
\end{align*}
A well-known result in matrix perturbation theory, due to
Gershgorin~\cite{gershgorin31disks}, states that the
eigenvalues of a matrix lie in the union of certain
disks in the complex plane centered around the diagonal
entries of the matrix.  We will need the following very
special case, which corresponds to showing that the
eigenvalues are all nonzero.
\begin{theorem}[\rm Gershgorin]
\label{thm:gershgorin}
Let $A\in\Re^{n\times n}$ be strictly diagonally dominant. Then
$A$ is nonsingular.
\end{theorem}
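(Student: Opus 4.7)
The plan is to prove Theorem~\ref{thm:gershgorin} by contradiction: I will assume $A$ is singular, exhibit a nonzero vector $v$ in its kernel, and then use strict diagonal dominance to derive a contradiction from a single scalar row identity. This is the standard ``pick the largest coordinate'' argument behind Gershgorin's disk theorem, specialized to the disk around the origin.

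First I would assume, for contradiction, that there exists $v\in\Re^n\setminus\{0\}$ with $Av=0$, and choose an index $i$ maximizing $|v_i|$; note $|v_i|>0$ since $v\ne 0$. Writing out the $i$-th coordinate of $Av=0$, I would isolate the diagonal term to obtain $A_{ii}v_i=-\sum_{j\ne i}A_{ij}v_j$. Taking absolute values on both sides and applying the triangle inequality, followed by the bound $|v_j|\le|v_i|$ for all $j$, yields
\begin{align*}
|A_{ii}|\,|v_i|\;\le\;\sum_{j\ne i}|A_{ij}|\,|v_j|\;\le\;|v_i|\sum_{j\ne i}|A_{ij}|.
\end{align*}
Dividing through by $|v_i|>0$ gives $|A_{ii}|\le\sum_{j\ne i}|A_{ij}|$, directly contradicting the strict diagonal dominance hypothesis. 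Hence no such $v$ exists, so $A$ is nonsingular.

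There is no real obstacle here: the proof is essentially a one-line computation once the correct coordinate of $v$ is selected. The only subtle point to highlight is the strictness of the final inequality chain, which is guaranteed precisely because dividing by $|v_i|$ is legal ($|v_i|>0$) and because the hypothesis is a \emph{strict} inequality for every row, so the contradiction cannot be escaped even in degenerate cases where several $|v_j|$ equal $|v_i|$.
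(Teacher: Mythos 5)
Your proof is correct and uses the same key idea as the paper's — choose the index of maximal $|v_i|$ and apply the triangle inequality to row $i$. The only cosmetic difference is that the paper phrases it as a direct argument (for every nonzero $x$, show $|(Ax)_i|>0$) rather than by contradiction.
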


\begin{proof}[Proof \textup{(Gershgorin)}.]
Fix a nonzero vector $x\in\Re^n$ and choose $i$ such that
$\abs{x_i}=\| x\|_\infty.$ Then by strict diagonal dominance,
\begin{align*}
\abs{(Ax)_i} = \ABS{\sum_{j=1}^n A_{ij}x_j}
\geq \abs{A_{ii}} \|x\|_\infty - 
\sum_{\substack{j=1\\j\ne i}}^n \abs{A_{ij}} \|x\|_\infty
>0,
\end{align*}
so that $Ax\ne 0.$
\end{proof}

\subsection*{Rational approximation.}

The degree of a rational function $p(x)/q(x),$ where
$p$ and $q$ are polynomials on $\Re^n,$ is the maximum
of the degrees of $p$ and $q.$ Consider a function
$f\colon X\to\moo,$ where $X\subseteq \Re^n.$ For
$d\geq0,$ define
\begin{align*}
 R(f,d) \,= \,\inf_{\rule{0pt}{7pt}p,q} \,\sup_{x\in X} 
   \left\lvert f(x) - \frac{p(x)}{q(x)} \right\rvert,
\end{align*}
where the infimum is over multivariate polynomials $p$
and $q$ of degree up to $d$ such that $q$ does not
vanish on $X.$ In words, $R(f,d)$ is the least error in
an approximation of $f$ by a multivariate rational
function of degree up to $d.$ A closely related
quantity is
\begin{align*}
 R^+(f,d) \,= \,\inf_{\rule{0pt}{7pt}p,q} \,\sup_{x\in X} 
   \left\lvert f(x) - \frac{p(x)}{q(x)} \right\rvert,
\end{align*}
where the infimum is over multivariate polynomials $p$
and $q$ of degree up to $d$ such that $q$ is positive
on $X.$ These two quantities are related in a
straightforward way:
\begin{align*}
 R^+(f,2d) \leq R(f,d) \leq R^+(f,d). 
\end{align*}
The second inequality here is trivial. The first
follows from the fact that every rational approximant
$p(x)/q(x)$ of degree $d$ gives rise to a degree-$2d$
rational approximant with the same error and a positive
denominator, namely, $\{p(x)q(x)\}/q(x)^2.$

The infimum in the definitions of $R(f,d)$ and
$R^+(f,d)$ cannot in general be replaced by a
minimum~\cite{rivlin-book}, even when $X$ is finite
subset of $\Re.$ This contrasts with the more familiar
setting of a finite-dimensional normed linear space,
where least-error approximants are guaranteed to exist.

For $S\subseteq\Re,$ we let 
\begin{align*}
 R^+(S,d) = \inf_{\rule{0pt}{7pt}p,q}\, \sup_{x\in S} 
   \left\lvert \sign x - \frac{p(x)}{q(x)} \right\rvert,
\end{align*}
where the infimum ranges over $p,q\in P_d$ such that
$q$ is positive on $S.$ The study of the rational
approximation of the sign function dates back to seminal work
by Zolotarev~\cite{zolotarev1877rational} in the late
19th century. A much later result due to
Newman~\cite{newman64rational} gives highly accurate
estimates of $R^+([-n,-1]\cup[1,n],d)$ for all $n$ and
$d.$ Newman's work in particular provides upper
bounds on $R^+(\{\pm1,\pm2,\dots,\pm n\}, d),$ which
in~\cite{sherstov09hshs} were sharpened and
complemented with matching lower bounds to the
following effect:

\begin{theorem}[\rm Sherstov]
\label{thm:R+sign}
Let $n,d$ be positive integers, 
$R=R^+(\{\pm1,\pm2,\dots,\pm n\},d).$
For $1\leq d\leq\log n,$
\begin{align*}
   \exp\BRACES{-\Theta\PARENS{\frac1{n^{1/(2d)}}}}
\leq R < \exp\BRACES{-\frac1{n^{1/d}}}. 
\end{align*}
For $\log n < d <n,$
\begin{align*}
   R=\exp\BRACES{-\Theta\PARENS{\frac{d}{\log (2n/d)}}}.
\end{align*}
For $d\geq n,$ 
\begin{align*}
   R=0.
\end{align*}
\end{theorem}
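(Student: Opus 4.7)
The plan is to split the argument into the three regimes of the statement. For $d\geq n$, I would show $R=0$ by exact rational interpolation. Writing $p(x)=x\,s(x^2)$ and $q(x)=t(x^2)$ with $\deg s\leq\lfloor(n-1)/2\rfloor$ and $\deg t\leq\lfloor n/2\rfloor$ (so that $\deg p,\deg q\leq n$), the requirement $p/q=\sign$ on $\{\pm 1,\dots,\pm n\}$ reduces to the homogeneous linear system $k\,s(k^2)=t(k^2)$ for $k=1,\dots,n$. This is $n$ equations in $n+1$ unknowns, so a nontrivial solution exists, and a small perturbation produces a solution with $t$ strictly positive on $\{1,4,\dots,n^2\}$, yielding $R^+(S,n)=0$.

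For the upper bounds when $d<n$, I would use a Newman-type construction
\[
r(x)\;=\;\frac{N(x)-N(-x)}{N(x)+N(-x)},\qquad N(x)=\prod_{k=1}^{d}(x+c_k),
\]
with positive nodes $c_1<\cdots<c_d$ in geometric progression on $[1,n]$. The denominator is positive on $\{\pm 1,\dots,\pm n\}$, so $|\sign(x)-r(x)|\leq 2\,|N(-|x|)/N(|x|)|$, and it suffices to estimate $\prod_k|c_k-|x||/(c_k+|x|)$ uniformly over the positive points. Taking the common ratio to be $n^{1/d}$ when $d\leq\log n$ and of order $\log(2n/d)$ when $\log n<d<n$, I would split this product into ``far'' factors (each very close to $\pm 1$) and a bounded number of ``near'' factors. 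A direct estimate on the ``near'' factors yields the claimed bounds $\exp(-1/n^{1/d})$ and $\exp(-\Theta(d/\log(2n/d)))$ respectively.

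The lower bounds are the main obstacle. By duality (analogous to Chebyshev-type alternation for weighted polynomial approximation), a bound $R^+(S,d)\geq R$ can be certified by exhibiting, for every positive degree-$d$ denominator $q$, a signed measure $\mu$ on $S=\{\pm 1,\dots,\pm n\}$ satisfying $\sum_x\mu(x)\,p(x)=0$ for all $p\in P_d$ and $\bigl|\sum_x\mu(x)\sign(x)\,q(x)\bigr|\geq R\sum_x|\mu(x)|\,q(x)$. I would construct $\mu$ supported on a geometric subsequence of $S$, solving the moment conditions by Vandermonde inversion and bounding the resulting weights via standard divided-difference estimates on a geometric mesh. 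The technical heart is controlling the growth of these weights against $d$: this is what produces the exponent $d/\log(2n/d)$ in the middle regime and $1/n^{1/(2d)}$ in the small-$d$ regime, and matching the latter against the upper bound $\exp(-1/n^{1/d})$ leaves the known gap inherent to LP-based lower bounds on a discrete set versus Zolotarev-type extremal estimates on a continuum.
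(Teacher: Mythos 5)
The paper does not prove Theorem~\ref{thm:R+sign} at all: the sentence immediately preceding it explicitly attributes the result to~\cite{sherstov09hshs} (``which in~\cite{sherstov09hshs} were sharpened and complemented with matching lower bounds to the following effect''). So there is no in-paper proof for your attempt to be compared against, and the right benchmark is the cited prior work.

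Judged on its own, your sketch has a genuine gap in the upper bound for the regime $\log n < d < n.$ Put $c_k=\rho^{k-1}$ and evaluate $\prod_{k=1}^d |c_k-x|/(c_k+x)$ at a geometric midpoint $x=\rho^{j-1/2}$: the product telescopes to
\[
\prod_{m\geq 0}\tanh\!\Bigl(\bigl(m+\tfrac12\bigr)\tfrac{\ln\rho}{2}\Bigr)^{2},
\]
which for $\rho$ bounded away from $1$ converges to a positive \emph{constant} independent of $d.$ Taking $\rho\asymp\log(2n/d)$ therefore yields an error that is $\Omega(1)$ over most of the middle regime, nowhere near $\exp\{-\Theta(d/\log(2n/d))\}.$ More fundamentally, no single geometric mesh on a continuum can beat Zolotarev's $\exp\{-\Theta(d/\log n)\},$ whereas the theorem asserts the strictly smaller $\exp\{-\Theta(d/\log(2n/d))\};$ the improvement comes precisely from the discreteness of $\{\pm1,\dots,\pm n\},$ which your construction never uses. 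One correct construction spends $\Theta(d)$ of the degree budget on nodes placed exactly at the integers $1,2,\dots,\Theta(d)$ (killing the error there outright) and uses a geometric mesh of ratio $\rho=(2n/d)^{\Theta(1/d)}$ only for the tail $[\,\Theta(d),n\,];$ then $\ln\rho=\Theta(\log(2n/d)/d)$ and the two-sided $\tanh$-product gives $\exp\{-\Theta(1/\ln\rho)\}=\exp\{-\Theta(d/\log(2n/d))\}.$ Your small-$d$ analysis and the $d\geq n$ interpolation argument are plausible in outline (though the denominator-positivity perturbation needs care, and the geometric mesh actually yields $\exp\{-\Theta(n^{-1/(2d)})\}$ rather than the stated $\exp\{-1/n^{1/d}\}$). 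The lower-bound sketch is sound as a duality framework, but the entire difficulty lies in making the measure's estimate uniform over \emph{every} positive degree-$d$ denominator $q$; ``Vandermonde inversion plus divided-difference estimates'' does not obviously deliver that, and this is where all of the work in~\cite{sherstov09hshs} resides.
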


\noindent
Theorem~\ref{thm:R+sign} has the following 
corollary~\cite[Thm.~1.7]{sherstov09hshs},
in which we adopt the notation $\rdeg_\eps(f)= \min\{d :
R^+(f,d)\leq\epsilon\}.$

\begin{theorem}[\rm Sherstov]
\label{thm:approx-maj}
Let $\MAJ_n\colon \zoon\to\moo$ denote the majority function. Then
\begin{align*}
\rdeg_\epsilon(\MAJ_n)= 
\ccases{ 
\displaystyle 
\Theta\PARENS{
\log \BRACES{\frac{2n}{\log(1/\epsilon)}}
\cdot 
\log \frac1\epsilon
},
	&\qquad 2^{-n}<\epsilon<1/3,\\
\rule{0mm}{10mm}
\displaystyle 
\Theta\PARENS{1 + \frac{\log n}{\log\{1/(1-\epsilon)\}}},
	&\qquad 1/3\leq \epsilon<1.
}
\end{align*}
\end{theorem}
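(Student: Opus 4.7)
The plan is to reduce rational approximation of $\MAJ_n$ on $\zoon$ to univariate rational approximation of $\sign$ on a finite integer set essentially equivalent to $\{\pm 1,\pm 2,\ldots,\pm n\}$, and then invert the estimates of Theorem~\ref{thm:R+sign} in each of the two error regimes.

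For the reduction, I would perform the standard symmetrization. Given a rational approximant $p(x)/q(x)$ of degree $\leq d$ with $q>0$ on $\zoon$ satisfying $|p(x)/q(x)-\MAJ_n(x)|\leq\epsilon$, rewrite this as $|p(x)-\MAJ_n(x)q(x)|\leq\epsilon\,q(x)$ and average $p$ and $q$ separately over all coordinate permutations. The resulting symmetric polynomials $P$ and $Q$ have degree $\leq d$ and $Q>0$ (as a sum of positives); by the triangle inequality together with the permutation-invariance of $\MAJ_n$ one gets $|P/Q-\MAJ_n|\leq\epsilon$ pointwise. Since every symmetric polynomial of degree $\leq d$ on $\zoon$ is a univariate polynomial of degree $\leq d$ in the Hamming weight $t=\sum_i x_i$, this produces a degree-$d$ univariate rational approximant with the same error. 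Writing $s=2t-n$, the problem becomes approximating $\sign s$ on $\{\pm 1,\pm 3,\ldots,\pm n\}$ (for $n$ odd; the even case is analogous with a single tie value). Up to absolute constants, this set is equivalent to $\{\pm 1,\pm 2,\ldots,\pm n\}$ of Theorem~\ref{thm:R+sign}: cardinality and spread ratio agree up to constant factors, and the proof of Theorem~\ref{thm:R+sign} extends in the standard way to cover it. Conversely, any univariate rational approximant of $\sign s$ on this set pulls back through $s=2\sum_i x_i-n$ (a polynomial of degree $1$) to a multivariate rational approximant of $\MAJ_n$ of the same degree and error, giving the matching upper bound.

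It remains to invert Theorem~\ref{thm:R+sign}. In the regime $2^{-n}<\epsilon<1/3$ the required degree satisfies $\log n<d<n$, where $R^{+}=\exp\{-\Theta(d/\log(2n/d))\}$; the equation $R^{+}=\epsilon$ becomes $d=\Theta(\log(1/\epsilon)\cdot\log(2n/d))$, and a single bootstrap step---substituting $d\approx\log(1/\epsilon)$ into $\log(2n/d)$---gives the claimed $\Theta(\log(1/\epsilon)\cdot\log(2n/\log(1/\epsilon)))$. In the regime $1/3\leq\epsilon<1$ one has $d\leq\log n$, where $R^{+}=\exp\{-\Theta(1/n^{1/d})\}$ is close to $1$, so $1-R^{+}=\Theta(1/n^{1/d})$ after linearization; the constraint $R^{+}\leq\epsilon$ (equivalently $1-R^{+}\geq 1-\epsilon$) is then equivalent to $n^{1/d}\leq O(1/(1-\epsilon))$, i.e., $d=\Omega(\log n/\log(1/(1-\epsilon)))$, and combining with the trivial $d\geq 1$ yields the claim. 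The main technical subtlety is the implicit inversion in the first regime, but $\log(2n/d)$ varies only logarithmically over the feasible range, so the bootstrap closes in one step and the matching upper and lower estimates from Theorem~\ref{thm:R+sign} agree up to constants.
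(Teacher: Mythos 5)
Your proposal is correct and follows the route the paper has in mind: the paper states this theorem as a corollary of Theorem~\ref{thm:R+sign} obtained via symmetrization (Proposition~\ref{prop:symm-rational}), which is exactly your plan. The one step that needs a sharper justification is the lower-bound direction of the passage from the arithmetic progression $\{\pm1,\pm3,\dots,\pm n\}$ (the image of the Hamming weights under $t\mapsto 2t-n$) to $\{\pm1,\pm2,\dots,\pm n\}$: the subset inclusion gives $R^+(\{\pm1,\pm3,\dots,\pm n\},d)\leq R^+(\{\pm1,\dots,\pm n\},d),$ which is the wrong direction for the lower bound on $\rdeg_\epsilon(\MAJ_n).$ Rather than asserting that the cited proof ``extends,'' one concrete fix is antisymmetrization: from a degree-$d$ approximant $r=p/q$ of $\sign$ on $\{\pm1,\pm3,\dots,\pm n\}$ with $q>0,$ define $r^*(s)=\frac12\{r(2s-1)-r(-2s-1)\};$ this has degree at most $2d,$ positive denominator $q(2s-1)q(-2s-1),$ and approximates $\sign s$ on $\{\pm1,\dots,\pm\lfloor(n-1)/2\rfloor\}$ to the same error, so $R^+(\{\pm1,\dots,\pm\lfloor(n-1)/2\rfloor\},2d)\leq R^+(\{\pm1,\pm3,\dots,\pm n\},d),$ which suffices because the final formulas are stable under $n\mapsto n/2$ and $d\mapsto 2d.$ Your inversion of Theorem~\ref{thm:R+sign} in both regimes, including the one-step bootstrap on $\log(2n/d),$ is correct.
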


\subsection*{Threshold degree.} 

Let $f\colon X\to\moo$ be a given Boolean function,
where $X\subset \Re^n$ is finite.  The \emph{threshold
degree} of $f,$ denoted $\degthr(f),$ is the least
degree of a polynomial $p(x)$ such that
$f(x)\equiv\sign p(x).$ The term ``threshold degree''
appears to be due to Saks~\cite{saks93slicing}. 
Equivalent terms in the literature include ``strong
degree''~\cite{aspnes91voting}, ``voting polynomial
degree''~\cite{krause94depth2mod}, ``polynomial
threshold function degree''~\cite{odonnell03degree},
and ``sign degree''~\cite{buhrman07pp-upp}.

Given functions $f\colon X\to\moo$ and $\g\colon
Y\to\moo,$ we let the symbol $f\wedge\g$ stand for the
function $X\times Y\to\moo$ given by $(f\wedge
\g)(x,y)=f(x)\wedge \g(y).$ Note that in this notation,
$f$ and $f\wedge f$ are completely different functions,
the former having domain $X$ and the latter $X\times
X.$ An elegant observation, due to Beigel et
al.~\cite{beigel91rational}, relates the notions of
sign-representation and rational approximation for
conjunctions of Boolean functions.

\begin{theorem}[\rm Beigel, Reingold, and Spielman]
\label{thm:rational-is-possible}
Let $f\colon X\to\moo$ and $\g\colon Y\to\moo$ be given functions,
where $X,Y\subseteq\Re^n.$ Let $d$ be an integer with 
$R^+(f,d) + R^+(\g,d)<1.$
Then
\begin{align*}
\degthr(f\wedge \g) \leq 2d.
\end{align*}
\end{theorem}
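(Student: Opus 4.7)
The plan is to start from near-optimal rational approximants and combine them arithmetically, then clear denominators.

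First I would use the definition of $R^+$ to pick, for any $\delta>0$, polynomials $p_f,q_f$ of degree at most $d$ with $q_f>0$ on $X$ and $|f(x)-p_f(x)/q_f(x)|\leq R^+(f,d)+\delta$ for all $x\in X$, and analogously $p_g,q_g$ on $Y$. Because of the hypothesis, $\delta$ can be chosen small enough that
\begin{align*}
\epsilon := R^+(f,d)+R^+(g,d)+2\delta < 1.
\end{align*}
Write $F(x)=p_f(x)/q_f(x)$ and $G(y)=p_g(y)/q_g(y)$, so that $|f(x)-F(x)|$ and $|g(y)-G(y)|$ together are at most $\epsilon$ in the sense that $|f(x)-F(x)|+|g(y)-G(y)|\leq\epsilon<1$.

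The key idea is to consider the shifted sum $F(x)+G(y)+1$ and check that its sign equals $(f\wedge g)(x,y)$ on $X\times Y$. Recall the convention that $-1$ denotes ``true''. I would split into four cases according to the values of $(f(x),g(y))$ in $\{-1,+1\}^2$: the only case in which $f\wedge g=-1$ is $f(x)=g(y)=-1$, and there $F(x)+G(y)+1$ lies in $[-1-\epsilon,\,-1+\epsilon]\subset(-\infty,0)$; in each of the remaining three cases $F(x)+G(y)+1$ lies in one of $[1-\epsilon,1+\epsilon]$ or $[3-\epsilon,3+\epsilon]$, all of which are contained in $(0,\infty)$. The inequality $\epsilon<1$ is exactly what is needed to separate these two situations.

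Finally I would clear denominators. Since $q_f(x)q_g(y)>0$ everywhere on $X\times Y$, the sign of $F(x)+G(y)+1$ agrees with the sign of
\begin{align*}
P(x,y) \,=\, p_f(x)\,q_g(y)+p_g(y)\,q_f(x)+q_f(x)\,q_g(y),
\end{align*}
which is a polynomial of degree at most $2d$. The previous paragraph shows $(f\wedge g)(x,y)\equiv\sign P(x,y)$, yielding $\degthr(f\wedge g)\leq 2d$.

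The argument is short, and there is no serious obstacle; the only subtle point is choosing the correct additive constant. Using $F(x)+G(y)$ alone fails, because in the ``mixed'' cases $f=-g$ this sum is only guaranteed to lie in $[-\epsilon,\epsilon]$, which does not separate it from the $f=g=-1$ case. Shifting by $+1$ is precisely what pushes the three non-conjunction cases into the positive halfline while keeping the conjunction case in the negative halfline, and this is where the hypothesis $R^+(f,d)+R^+(g,d)<1$ is used tightly.
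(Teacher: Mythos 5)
Your proof is correct and takes essentially the same route as the paper: use near-optimal positive-denominator rational approximants, observe that $f\wedge g \equiv \sign(1+f+g)$ so that replacing $f,g$ by their approximants preserves the sign whenever the total approximation error is below $1$, and clear the positive denominators to obtain a degree-$2d$ polynomial. The paper states this more compactly (it writes the chain $\sign\{1+f+g\}\equiv\sign\{1+p_1/q_1+p_2/q_2\}$ without enumerating the four sign patterns), but your case analysis and your explicit $\delta$-slack handling of the infimum are just a more detailed rendering of the same argument.
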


\begin{proof}[Proof \textup{(Beigel, Reingold, and Spielman)}.]
Consider rational functions $p_1(x)/q_1(x)$ and $p_2(y)/q_2(y)$ of degree
at most $d$ such that $q_1$ and $q_2$ are positive on $X$ and $Y,$
respectively, and
\begin{align*}
   \sup_{X} \left| f(x) - \frac{p_1(x)}{q_1(x)}\right| + 
   \sup_{Y} \left| \g(y) - \frac{p_2(y)}{q_2(y)}\right| < 1. 
\end{align*}
Then
\begin{align*}
 f(x)\wedge \g(y) \equiv \sign\{1 + f(x)+\g(y)\}
 \equiv \sign\BRACES{ 1 + \frac{p_1(x)}{q_1(x)} +
\frac{p_2(y)}{q_2(y)} }.
\end{align*}
Multiplying the last expression by the positive quantity $q_1(x)q_2(y)$
gives $f(x)\wedge \g(y) \equiv \sign\{ q_1(x)q_2(y) + p_1(x)q_2(y)
+ p_2(y)q_1(x)\}.$
\end{proof}

We will also need a converse to
Theorem~\ref{thm:rational-is-possible}, proved
in~\mbox{\cite[Thm.~3.9]{sherstov09hshs}}.

\begin{theorem}[\rm Sherstov]
\label{thm:sherstov-degthr-R}
Let $f\colon X\to\moo$ and $\g\colon Y\to\moo$ be given functions, where
$X,Y\subset\Re^n$ are arbitrary finite sets. Assume that $f$ and
$g$ are not identically false. Let $d=\degthr(f\wedge g).$ Then 
\begin{align*}
R^+(f,4d) + R^+(g,2d) < 1.
\end{align*}
\end{theorem}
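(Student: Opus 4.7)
The plan is to extract, from a polynomial $P(x,y)$ of degree $\leq d$ witnessing $\degthr(f \wedge g) \leq d$, univariate sign-representations of $f$ and $g$ and to convert them into rational approximations of the required degrees. Since $g$ is not identically false, pick $y_0 \in Y$ with $g(y_0) = -1$, and symmetrically pick $x_0 \in X$ with $f(x_0) = -1$. Then $P_f(x) := P(x,y_0)$ is of degree $\leq d$ in $x$ and satisfies $\sign P_f(x) = f(x) \wedge g(y_0) = f(x)$ on $X$, so $P_f$ sign-represents $f$; likewise $P_g(y) := P(x_0, y)$ sign-represents $g$ in degree $\leq d$.

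To convert these sign-representations into rational approximations, I would invoke Theorem~\ref{thm:R+sign}. Normalize so that $\min_x |P_f(x)| = 1$ with $\max_x |P_f(x)| = M_f$; then the image $\{P_f(x) : x \in X\} \subseteq [-M_f,-1]\cup[1,M_f]$. A rational function of degree $\leq 4$ in one real variable that approximates $\sign$ on this set with some error $\epsilon_f$ yields, upon composition with $P_f$, a rational function of degree $\leq 4d$ in $x$ that approximates $f$ with the same error $\epsilon_f$. Analogously, composing a degree-$\leq 2$ rational approximation of $\sign$ on $\{P_g(y) : y \in Y\} \subseteq [-M_g,-1]\cup[1,M_g]$ with $P_g$ produces a rational function of degree $\leq 2d$ in $y$ approximating $g$ with some error $\epsilon_g$.

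The main obstacle is establishing the sum bound $\epsilon_f + \epsilon_g < 1$. The quantities $M_f$ and $M_g$ are determined by $P$ and by the pivots $x_0, y_0$ and can a priori be arbitrarily large; if so, both $\epsilon_f$ and $\epsilon_g$ approach $1$ and their sum can exceed $1$. To overcome this I would pursue one of two routes: (i) exploit the freedom in choosing $P$ within all degree-$\leq d$ sign-representing polynomials of $f \wedge g$ and in selecting $x_0, y_0$ in order to balance the two ratios $M_f, M_g$, perhaps via an LP-duality optimization over the family of admissible $P$'s; or (ii) prove the contrapositive directly by dualization: assume $R^+(f, 4d) + R^+(g, 2d) \geq 1$ and use LP duality for rational approximation to produce signed measures $\nu_f$ on $X$ and $\nu_g$ on $Y$ certifying the lower bounds (each orthogonal to the appropriate space of rational functions with a chosen denominator), then combine them into a bivariate signed measure $\mu$ on $X \times Y$ that is orthogonal to every polynomial of degree $\leq d$ and sign-agrees with $f \wedge g$, contradicting $\degthr(f \wedge g) \leq d$. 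The asymmetric degree budget $(4d,2d)$ appears precisely calibrated for route (ii): the degree-$\leq 2d$ denominator arising on the $g$-side is squared when lifted to the $f$-side construction, raising its degree to $4d$ and providing the cross-variable orthogonality against bivariate polynomials of degree $\leq d$.
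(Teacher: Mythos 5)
You set up the slicing correctly and you identify the crux: the ratios $M_f$ and $M_g$ are a priori unbounded, so the naive slice-and-compose construction gives individual errors $\epsilon_f$ and $\epsilon_g$ that both approach $1$, and nothing forces their sum below $1$. But that obstacle is the entire content of the theorem, and neither of your proposed routes resolves it. Route~(i) fails even in its best case: suppose the pivots could be chosen so that $M_f = M_g = M$. Composing with degree-$4$ and degree-$2$ Newman approximants on $[-M,-1]\cup[1,M]$ gives $\epsilon_f \le 1 - M^{-c/4}$ and $\epsilon_g \le 1 - M^{-c/2}$ for an absolute constant $c$, so $\epsilon_f + \epsilon_g < 1$ would require $M^{-c/4} + M^{-c/2} > 1$, which fails once $M$ exceeds a fixed constant. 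There is no general bound on $M$ for a minimum-degree sign-representation of $f\wedge g$, so balancing the ratios alone cannot suffice; what is needed is an argument that couples the two slices so that a large $M_f$ forces a correspondingly favorable $M_g$ (or else an entirely different construction of the approximants from $P$), and you do not supply one.

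Route~(ii) is in the same state. You assert that the two LP-dual witnesses can be merged into a single orthogonalizing distribution for $f\wedge g$ on $X\times Y$, but this merging step is precisely where the asymmetric degree budget $(4d, 2d)$ must be earned, and you only speculate about why it should succeed. You also pass over the fact that $\lvert f(x) - p(x)/q(x)\rvert\le\epsilon$ is not linear in $(p,q)$; it can be linearized as $\lvert f q - p\rvert\le\epsilon q$ with $q\ge 1$, but the dual certificate then takes a form different from the orthogonalizing measure your sketch envisions, and the combination argument must be adapted accordingly. In short, the inequality $\epsilon_f + \epsilon_g < 1$ --- the heart of the theorem --- is identified but not established; the actual argument appears in the proof of Theorem~3.9 of~\cite{sherstov09hshs}.
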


\subsection*{Symmetric functions.}
Let $S_n$ denote the symmetric group on $n$ elements.
For $\sigma\in S_n$ and $x\in\zoon$, we denote $\sigma
x=(x_{\sigma(1)},\ldots,x_{\sigma(n)})\in\zoon.$ For
$x\in\zoon,$ we define $|x|= x_1+x_2+\cdots+x_n.$ A
function $\phi\colon \zoon\to\Re$ is called
\emph{symmetric} if $\phi(x) = \phi(\sigma x)$ for
every $x\in\zoon$ and every $\sigma\in S_n.$
Equivalently, $\phi$ is symmetric if $\phi(x)$ is
uniquely determined by $|x|.$ Symmetric functions on
$\zoon$ are intimately related to univariate
polynomials, as borne out by Minsky and Papert's
\emph{symmetrization
argument}~\cite{minsky88perceptrons}:

\begin{proposition}[\rm Minsky and Papert]
Let $\phi\colon \zoon\to\Re$ be a polynomial of degree $d.$ Then
there is a polynomial $p\in P_d$ such that 
\begin{align*}
  \Exp_{\sigma\in S_n} [\phi(\sigma x)] = p(|x|),
	&&x\in\zoon.
\end{align*}
\end{proposition}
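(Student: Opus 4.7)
The plan is to reduce to the case of a single monomial and observe that symmetrizing a monomial gives a falling factorial in $|x|$.

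First, by linearity of expectation, it suffices to establish the claim when $\phi$ is a single monomial $\chi_S(x) = \prod_{i\in S} x_i$ with $|S|\le d$, since the general case follows by taking a linear combination of the univariate polynomials obtained for each monomial (and the degree of the resulting $p$ is the maximum over the monomials appearing in $\phi$, which is at most $d$).

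Next, I would compute $\Exp_{\sigma\in S_n}[\chi_S(\sigma x)]$ directly. Because $x\in\zoon$, the product $\prod_{i\in S} x_{\sigma(i)}$ takes value $1$ precisely when $\sigma$ maps the set $S$ into the support $\{j:x_j=1\}$, and value $0$ otherwise. The number of permutations $\sigma\in S_n$ with this property equals the number of injections $S\hookrightarrow\{j:x_j=1\}$ (which is $|x|(|x|-1)\cdots(|x|-|S|+1)$) multiplied by $(n-|S|)!$ choices for $\sigma$ on the remaining coordinates. Dividing by $n!$ therefore yields
\begin{align*}
\Exp_{\sigma\in S_n}[\chi_S(\sigma x)]
=\frac{|x|(|x|-1)\cdots(|x|-|S|+1)}{n(n-1)\cdots(n-|S|+1)}.
\end{align*}
The right-hand side is a univariate polynomial in $|x|$ of degree exactly $|S|\le d$.

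Summing these identities weighted by the Fourier-free coefficients of $\phi$ (i.e., the coefficients in its multilinear expansion over $\zoon$) produces the desired $p\in P_d$ with $\Exp_\sigma[\phi(\sigma x)] = p(|x|)$ for all $x\in\zoon$. No step is really an obstacle here; the only thing to be careful about is the counting of permutations sending $S$ into the support of $x$, but this is a standard falling-factorial computation.
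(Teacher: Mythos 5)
Your proof is correct. The paper states this proposition as a known result of Minsky and Papert and does not supply a proof, so there is nothing to compare against; your argument (linearity reduction to a monomial, counting injections from $S$ into the support of $x$, and recognizing the falling factorial $|x|(|x|-1)\cdots(|x|-|S|+1)/\big(n(n-1)\cdots(n-|S|+1)\big)$ as a degree-$|S|$ polynomial in $|x|$) is the standard one and is sound. One minor caveat: the paper reserves $\chi_S$ for the parity character $(-1)^{\sum_{i\in S}x_i}$, so using $\chi_S$ for the monomial $\prod_{i\in S}x_i$ clashes with the paper's notation, though it does not affect the validity of the argument. Also, strictly speaking you should note that replacing $\phi$ by its multilinearization over $\zoon$ (using $x_i^2 = x_i$) cannot increase the degree, which is what justifies writing $\phi$ as a combination of monomials $\prod_{i\in S} x_i$ with $|S|\le d$; you gesture at this with ``multilinear expansion over $\zoon$,'' and it is indeed correct.
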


\noindent
We will need the following consequence of Minsky
and Papert's technique for rational functions, 
pointed out in~\cite[Prop.~2.7]{sherstov09hshs}.

\begin{proposition}
\label{prop:symm-rational}
Let $n_1,\dots,n_k$ be positive integers.  Consider a function
$F\colon\zoo^{n_1}\times\cdots\times\zoo^{n_k}\to\moo$
such that $F(x_1,\dots,x_k)\equiv 
f(\abs{x_1},\dots,\abs{x_k})$ 
for some
$f\colon\{0,1,\dots,n_1\}\times\cdots\times\{0,1,\dots,n_k\}\to\moo.$
Then for all $d,$
\begin{align*}
R^+(F,d) = R^+(f,d).
\end{align*}
\end{proposition}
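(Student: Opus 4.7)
The plan is to prove the two inequalities $R^+(f,d)\leq R^+(F,d)$ and $R^+(F,d)\leq R^+(f,d)$ separately. The second is the easy direction: given any $p,q$ of degree at most $d$ in $k$ variables with $q$ positive on the domain of $f$ and $\sup_y |f(y) - p(y)/q(y)| \leq \epsilon$, substitute $y_i \mapsto |x_i| = x_{i,1}+\cdots+x_{i,n_i}$. The resulting rational function $p(|x_1|,\dots,|x_k|)/q(|x_1|,\dots,|x_k|)$ has the same degree, the same supremum error (since $F(x_1,\dots,x_k)=f(|x_1|,\dots,|x_k|)$), and a denominator that is positive on $\zoo^{n_1}\times\cdots\times\zoo^{n_k}$.

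The main content is the other direction. Let $P/Q$ be an arbitrary rational approximant of $F$ of degree at most $d$ with $Q>0$ on the domain and supremum error $\epsilon$. Write $G = S_{n_1}\times\cdots\times S_{n_k}$ and let $G$ act coordinatewise on $\zoo^{n_1}\times\cdots\times\zoo^{n_k}$. Since $F$ is $G$-invariant, each shifted function $P(\sigma x)/Q(\sigma x)$ approximates $F$ to within $\epsilon$. The key trick is to form the symmetrized approximant
\begin{align*}
\frac{\tilde P(x)}{\tilde Q(x)} \;:=\; \frac{\sum_{\sigma\in G} P(\sigma x)}{\sum_{\sigma\in G} Q(\sigma x)},
\end{align*}
and to observe the identity
\begin{align*}
\frac{\tilde P(x)}{\tilde Q(x)} \;=\; \sum_{\sigma\in G} \frac{Q(\sigma x)}{\sum_{\tau\in G} Q(\tau x)} \cdot \frac{P(\sigma x)}{Q(\sigma x)},
\end{align*}
which expresses $\tilde P/\tilde Q$ as a convex combination of the values $P(\sigma x)/Q(\sigma x)$, each within $\epsilon$ of $F(x)$. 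Hence $\tilde P/\tilde Q$ also approximates $F$ to within $\epsilon$. By construction $\tilde P$ and $\tilde Q$ are $G$-invariant polynomials of degree at most $d$, and $\tilde Q>0$ on the domain.

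Finally, an iterated application of Minsky and Papert's symmetrization, one group $S_{n_i}$ at a time, shows that any $G$-invariant polynomial on $\zoo^{n_1}\times\cdots\times\zoo^{n_k}$ of degree at most $d$ agrees on the domain with some $k$-variate polynomial of total degree at most $d$ evaluated at $(|x_1|,\dots,|x_k|)$. Applying this to both $\tilde P$ and $\tilde Q$ yields polynomials $p,q$ of degree at most $d$ with $q>0$ on $\{0,\dots,n_1\}\times\cdots\times\{0,\dots,n_k\}$ satisfying $\sup |f-p/q|\leq\epsilon$, which gives $R^+(f,d)\leq\epsilon$. Taking the infimum over $\epsilon>R^+(F,d)$ completes the argument.

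The step I expect to require the most care is the convex combination identity, since the averaging trick (dividing each individual $Q(\sigma x)$ by $\sum_\tau Q(\tau x)$) is the one nontrivial idea; everything else is bookkeeping. The iterated Minsky--Papert step is routine but should be spelled out so that the degree bound on $p$ and $q$ matches $d$ exactly rather than being inflated by symmetrization.
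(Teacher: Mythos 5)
The paper states Proposition~\ref{prop:symm-rational} without proof, citing~\cite[Prop.~2.7]{sherstov09hshs}, so there is no in-paper argument to compare against. Your proof is correct and is the standard symmetrization argument one expects behind this proposition. The easy direction by substitution is fine since $|x_i|$ is a degree-one polynomial, so the substituted rational function has degree at most $d$ and an everywhere-positive denominator. For the main direction, the key identity
\begin{align*}
\frac{\sum_{\sigma\in G} P(\sigma x)}{\sum_{\sigma\in G} Q(\sigma x)}
\;=\;
\sum_{\sigma\in G}\frac{Q(\sigma x)}{\sum_{\tau\in G}Q(\tau x)}\cdot\frac{P(\sigma x)}{Q(\sigma x)}
\end{align*}
is a legitimate convex-combination decomposition because $Q>0$ forces all weights to be positive and sum to one, and $F$ is $G$-invariant so every term $P(\sigma x)/Q(\sigma x)$ lies within $\epsilon$ of $F(x)$. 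The passage from the $G$-invariant polynomials $\tilde P,\tilde Q$ to univariate-per-block polynomials $p,q$ in $(|x_1|,\dots,|x_k|)$ of total degree at most $d$ is exactly the iterated Minsky--Papert symmetrization (note that $\tilde P$, being $G$-invariant, equals its own average over $G$, so the symmetrization lemma applies to it verbatim), and positivity of $\tilde Q$ transfers to $q$. Finally you correctly handle the fact that the infimum in $R^+$ need not be attained by arguing for each $\epsilon>R^+(F,d)$ and then passing to the infimum. No gaps.
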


\section{Analysis of Random Halfspaces}
\label{sec:analysis-of-random-hs}

In this section, we prove a certain structural property
of random halfspaces.
Specifically, we will 
fix integers $w_1,w_2,\dots,w_n$ at random from a suitable
range and analyze the sum
\begin{align*}
\sum_{i=1}^n w_ix_i
\end{align*}
as $x$ ranges over $\zoon.$ Our objective will be to
show that, for a typical choice of the weights $w_1,w_2,\dots,w_n,$ the distribution of this sum modulo
$2^{\Theta(n)}$ is highly random. More precisely, we
will show that the subset $X_s\subset\zoon$ that
induces any particular sum $s$ modulo $2^{\Theta(n)}$
is relatively large and that membership in $X_s$ is
almost uncorrelated with any polynomial of low degree.
We start with a technical lemma.

\begin{lemma}
\label{lem:random-xor}
Let $f,g\colon\zoon\to\zoo$ be given functions. Fix an integer
$k$ with $0\leq k\leq n/2.$ For a set $S\subseteq\oneton,$ define
$F_S\colon\zoon\to\zoo$ by
\begin{align*}
F_S(x) = f(x) \wedge \left(g(x)\oplus \bigoplus_{i\in S}
x_i\right).
\end{align*}
Fix a real $\zeta >0.$ Then with probability at least
$1-2^{-n+H(k/n)n +2\zeta n}$ over a uniformly random choice
of $S\in\Pcal(\oneton),$ one has
\begin{align}
\ABS{\hat F_S(T) - \frac12 \hat f(T)}  
  \leq 2^{-\zeta n-1}, && |T|\leq k.
\end{align}
\end{lemma}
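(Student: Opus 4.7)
The plan is to reduce the claim to a standard Parseval-plus-union-bound calculation by explicitly computing $\hat F_S(T)$ in terms of the Fourier spectrum of a single fixed function.

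First I will rewrite $F_S$ in a form convenient for Fourier analysis. Since $f,g$ take values in $\{0,1\}$, I convert $g \oplus \bigoplus_{i\in S}x_i$ to the $\pm 1$ encoding: write $(-1)^{g(x)}=1-2g(x)$ and note that $(-1)^{\bigoplus_{i\in S}x_i}=\chi_S(x)$, so
\begin{align*}
g(x)\oplus \bigoplus_{i\in S} x_i \;=\; \frac{1 - (1-2g(x))\,\chi_S(x)}{2}.
\end{align*}
Multiplying by $f(x)$ gives $F_S(x) = \tfrac12 f(x) - \tfrac12\, h(x)\,\chi_S(x)$, where $h(x):=f(x)(1-2g(x))$ is a fixed function (independent of $S$) with $|h(x)|\leq 1$. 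Taking Fourier coefficients and using $\widehat{h\chi_S}(T)=\hat h(T\oplus S)$, I obtain the key identity
\begin{align*}
\hat F_S(T) - \tfrac12 \hat f(T) \;=\; -\tfrac12\, \hat h(T\oplus S).
\end{align*}
So the lemma reduces to showing that, with high probability over $S$, one has $|\hat h(T\oplus S)|\leq 2^{-\zeta n}$ for every $|T|\leq k$.

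Next I count the ``bad'' Fourier coefficients. Let $B=\{U\subseteq[n]:|\hat h(U)|>2^{-\zeta n}\}$. By Parseval's identity~\eqref{eqn:parsevals} and $|h|\leq 1$,
\begin{align*}
|B|\cdot 2^{-2\zeta n} \;\leq\; \sum_{U} \hat h(U)^2 \;=\; \Exp_x[h(x)^2] \;\leq\; 1,
\end{align*}
so $|B|\leq 2^{2\zeta n}$. The desired bound fails only if $T\oplus S\in B$ for some $T$ with $|T|\leq k$, i.e.\ only if $S$ lies in the sumset $B\oplus\{T:|T|\leq k\}$. This sumset has size at most
\begin{align*}
|B|\cdot \sum_{i=0}^{k}\binom{n}{i} \;\leq\; 2^{2\zeta n}\cdot 2^{H(k/n)n},
\end{align*}
using the entropy bound~\eqref{eqn:entropy-bound}. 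Dividing by $2^n$ yields the failure probability $2^{-n+H(k/n)n+2\zeta n}$, and when failure does not occur, $|\hat F_S(T)-\tfrac12\hat f(T)|\leq \tfrac12\cdot 2^{-\zeta n}=2^{-\zeta n-1}$.

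There is no real obstacle: the only thing to get right is the bookkeeping between the $\{0,1\}$ and $\{\pm 1\}$ encodings of XOR in the first step, after which the argument is a clean Parseval and union bound. The proof is essentially as written above, with the randomness over $S$ translating, via the identity $\widehat{h\chi_S}(T)=\hat h(T\oplus S)$, into a uniform random shift of the index into the Fourier spectrum of the fixed function $h$.
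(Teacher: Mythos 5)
Your proof is correct and follows essentially the same route as the paper: you express $F_S(x)=\tfrac12 f(x)-\tfrac12 h(x)\chi_S(x)$ (the paper's $\phi$ equals $-\tfrac12 h$), observe that random $S$ shifts the index into the fixed spectrum of $h$, bound the number of large Fourier coefficients via Parseval, and finish with a union bound over the $\sum_{i\le k}\binom{n}{i}$ low-weight shifts. The bookkeeping and resulting failure probability match the paper's exactly.
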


\begin{proof}
Define $\phi\colon\zoon\to[-1/2,1/2]$ by
$\phi(x)=f(x)g(x) - \frac12 f(x).$
Define $\Scal\subseteq\Pcal(\oneton)$ by 
$\Scal = \{ S : \abs{\hat \phi(S)} \geq 2^{-\zeta n-1}\}.$
By Parseval's identity~(\ref{eqn:parsevals}), 
\begin{align}
\abs{\Scal} \leq 4^{\zeta n}.
\label{eqn:A-B-bound}
\end{align}
Since 
$F_S(x) = \frac12 f(x) + (-1)^{\sum_{i\in S}x_i}\phi(x),$
we have 
\begin{align}
\ABS{\hat F_S(T) - \frac12 \hat f(T)}
 =\abs{\hat \phi(S\oplus T)},
&&S,T\subseteq\oneton.
\label{eqn:F-S-spectrum}
\end{align}

For a uniformly random $S\in\Pcal(\oneton),$ the set $\{S\oplus
T: \abs{T}\leq k\}$ contains any fixed element of
$\Pcal(\oneton)$ with probability
$2^{-n}\sum_{i=0}^k{n\choose i}.$ By the union bound, we infer
that
\begin{align*}
\Prob_S[\{S\oplus T:\abs{T}\leq k\}\cap \Scal \ne
\varnothing]
\leq \abs{\Scal}\,2^{-n}\sum_{i=0}^k {n\choose i},
\end{align*}
which in view of (\ref{eqn:entropy-bound}) and
(\ref{eqn:A-B-bound}) is bounded from above by
$2^{-n+H(k/n)n+2\zeta n}.$ This observation, along with
(\ref{eqn:F-S-spectrum}), completes the proof.
\end{proof}

Using Lemma~\ref{lem:random-xor} and induction, we now obtain
a key intermediate result.

\begin{lemma}
\label{lem:resheto}
Fix an integer $k\geq0$ and reals $\epsilon,\zeta\in(0,1/2).$
Choose sets $S_0,S_1,\dots,S_k\in\Pcal(\oneton)$ uniformly at
random.  Fix any integer $s$ and define $f\colon\zoon\to\zoo$ by 
\begin{align}
f(x)=1\qquad\Leftrightarrow\qquad \sum_{i=0}^k 2^i\sum_{j\in
S_i} x_j\equiv s \pmod{2^{k+1}}.
\label{eqn:f-modular}
\end{align}
Then with probability at least
$1-(k+1)2^{-n+H(\epsilon)n+2\zeta n}$ over
the choice of $S_0,S_1,\dots,S_k,$ one has
\begin{align}
&\ABS{\hat f(T) - \frac{\delta_{T,\emptyset}}{2^{k+1}}} \leq
2^{-\zeta n}, 
		&&\abs{T}\leq \epsilon n.
		\label{eqn:f-low-order}
\end{align}
\end{lemma}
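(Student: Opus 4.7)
The plan is induction on $k$, with Lemma~\ref{lem:random-xor} providing the inductive step. For the base case $k=0$, the function $f$ in~(\ref{eqn:f-modular}) reduces to $f(x) = \tfrac{1}{2}(1 + (-1)^{s + \sum_{j \in S_0} x_j})$ when $S_0 \neq \emptyset$ and is constant otherwise, so its only potentially nonzero Fourier coefficients sit at $T = \emptyset$ and $T = S_0$. Inequality~(\ref{eqn:f-low-order}) thus fails only when $S_0 = \emptyset$ or $0 < |S_0| \leq \epsilon n$, events whose probabilities together are at most $2^{-n}\sum_{i=0}^{\lfloor \epsilon n\rfloor}\binom{n}{i} \leq 2^{-n + H(\epsilon)n}$ by~(\ref{eqn:entropy-bound}), well within the claimed bound.

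For the inductive step I decompose~(\ref{eqn:f-modular}) into a mod-$2^k$ condition on the low-order part and a residual carry parity. Writing $s = s' + 2^k b$ with $0 \leq s' < 2^k$ and $b \in \{0,1\}$, and setting $A(x) = \sum_{i=0}^{k-1} 2^i \sum_{j \in S_i} x_j$, an elementary check shows that~(\ref{eqn:f-modular}) is equivalent to the conjunction of $A(x) \equiv s' \pmod{2^k}$ and $\lfloor (A(x) - s')/2^k \rfloor + \sum_{j \in S_k} x_j \equiv b \pmod 2$. Let $g_0$ be the indicator of the first condition, which is itself an instance of~(\ref{eqn:f-modular}) with $k$ replaced by $k-1$ and $s$ replaced by $s'$; and let $g(x) \in \{0,1\}$ record the parity of $\lfloor (A(x) - s')/2^k\rfloor$ on the support of $g_0$ (and equal $0$ elsewhere). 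Then
\begin{align*}
f(x) \;=\; g_0(x) \wedge \PARENS{g(x) \oplus \bigoplus_{j \in S_k} x_j \oplus b},
\end{align*}
which, after absorbing the constant $b$ into $g$ by complementation if necessary, matches the $F_{S_k}$ shape of Lemma~\ref{lem:random-xor}.

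What remains is probabilistic bookkeeping. Because $g_0$ and $g$ depend only on $S_0, \ldots, S_{k-1}$, the set $S_k$ is an independent uniform draw. Applying the inductive hypothesis to $g_0$ and then Lemma~\ref{lem:random-xor} to $S_k$ (with its parameter ``$k$'' taken to be $\lfloor \epsilon n \rfloor$, which is legal since $\epsilon < 1/2$), a union bound gives, with probability at least $1 - (k+1) \cdot 2^{-n + H(\epsilon)n + 2\zeta n}$, both $|\hat g_0(T) - \delta_{T,\emptyset}/2^k| \leq 2^{-\zeta n}$ and $|\hat f(T) - \tfrac{1}{2}\hat g_0(T)| \leq 2^{-\zeta n - 1}$ for all $|T| \leq \epsilon n$; the triangle inequality then delivers $|\hat f(T) - \delta_{T,\emptyset}/2^{k+1}| \leq 2^{-\zeta n}$, closing the induction. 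The main conceptual obstacle is the decomposition itself: one must extract the correct carry-parity function $g$ so that the leftover randomness in $S_k$ acts precisely as the XOR mask demanded by Lemma~\ref{lem:random-xor}. Once this structural identity is in hand, the Fourier estimates and error accounting compose cleanly across the $k+1$ levels of the induction.
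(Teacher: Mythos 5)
Your proof is correct and follows essentially the same route as the paper's: an induction on $k$ whose step applies Lemma~\ref{lem:random-xor} to the decomposition $f = f' \wedge (g \oplus \bigoplus_{j\in S_k} x_j)$, where $f'$ encodes the mod-$2^k$ constraint on the low-order bits and $g$ the carry parity. One small slip in your displayed identity: for $f(x)=1$ you need $\lfloor(A(x)-s')/2^k\rfloor + \sum_{j\in S_k}x_j \equiv b \pmod 2$, which translates to the bracketed factor being $g(x)\oplus \bar b \oplus \bigoplus_{j\in S_k}x_j$ (with $g$ the raw carry parity), not $g(x)\oplus b \oplus \cdots$; but you flag that the constant is to be absorbed by complementation, and Lemma~\ref{lem:random-xor} is indifferent to it, so the bookkeeping is unaffected.
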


\begin{proof}
In view of the modular counting in (\ref{eqn:f-modular}), one may
assume that $0\leq s<2^{k+1}$ and therefore 
$s=\sum_{i=0}^k 2^ib_i$ for some $b_0,b_1,\dots,b_k\in\zoo.$ The
proof of the lemma is by induction on $k$ for a fixed $s.$

The base case $k=0$ corresponds to
$f(x) = \frac12 + \frac12(-1)^{b_0}\chi_{S_0}(x).$
One obtains (\ref{eqn:f-low-order}) by conditioning
on the event $|S_0|>\epsilon n,$ which in view of
(\ref{eqn:entropy-bound}) occurs with probability no
smaller than 
$1-2^{-n + H(\epsilon)n}.$

We now consider the inductive step. 
Define $f'\colon\zoon\to\zoo$ by 
\begin{align*}
f'(x)=1
  \qquad\Leftrightarrow\qquad 
\sum_{i=0}^{k-1} 2^i\sum_{j\in S_i} x_j\equiv 
\sum_{i=0}^{k-1} 2^ib_i\pmod{2^{k}}.
\end{align*}
Let $E_1$ be the event, over the choice of
$S_0,\dots,S_{k-1},$ that $\abs{\widehat
{f'}(T)-2^{-k}\delta_{T,\emptyset}} \leq
2^{-\zeta n}$ for
$\abs{T}\leq \epsilon n.$ By the inductive hypothesis,
\begin{align}
\Prob[E_1] \geq 1-k2^{-n+H(\epsilon)n + 2\zeta n}.
\label{eqn:inductive-assumption}
\end{align}
Let $E_2$ be the event, over the choice of $S_0,\dots,S_k,$ that
$\abs{\hat f(T) - \frac12 \widehat{f'}(T)}
\leq 2^{-\zeta n-1}$ for $\abs{T}\leq \epsilon n.$ 
In this terminology, it suffices to show that
\begin{align}
\Prob[E_1\wedge E_2] \geq 1- (k+1)2^{-n+H(\epsilon)n +2\zeta n}.
\label{eqn:claim-rephrased}
\end{align}
Observe that
\begin{align*}
f(x) = f'(x)\wedge 
\PARENS{g(x)\oplus \bigoplus_{i\in S_k}^{\phantom{a}} x_i},
\end{align*}
where $g\colon\zoon\to\zoo$
is the function such that $g(x)=1$ if and only if $b_k$
is the $(k+1)$st least significant bit of the integer
$\sum_{i=0}^{k-1} 2^i
\sum_{j\in S_i} x_j.$ 
As a result, Lemma~\ref{lem:random-xor} shows that
$\Prob[E_2]\geq 1-2^{-n+H(\epsilon)n +2\zeta n}.$ This bound,
along with (\ref{eqn:inductive-assumption}), settles
(\ref{eqn:claim-rephrased}) and thereby completes the induction.
\end{proof}

We have reached the main result of this section.

\begin{theorem}[\rm Key property of random halfspaces]
\label{thm:RESHETO}
Fix an integer $k\geq0$ and reals $\epsilon,\zeta\in(0,1/2).$
Choose integers $w_1,w_2,\dots,w_n$ uniformly at random from
$\{0,1,\dots,2^{k+1}-1\}.$ 
For $s\in\ZZ,$ define $f_s\colon\zoon\to\zoo$ by 
\begin{align}
f_s(x)=1\qquad\Leftrightarrow\qquad \sum_{i=1}^n w_ix_i\equiv 
s\pmod{2^{k+1}}.
\label{eqn:f_s}
\end{align}
Then with probability at least
$1-(k+1)2^{-n+H(\epsilon)n+2\zeta n+k+1}$ over
the choice of $w_1,w_2,\dots,w_n,$ one has 
\begin{align*}
&\ABS{\hat f_s(T) - \frac{\delta_{T,\emptyset}}{2^{k+1}}} \leq
2^{-\zeta n}, 
		&&\abs{T}\leq \epsilon n, \quad s\in\ZZ.
\end{align*}
\end{theorem}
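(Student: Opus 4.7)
My approach is to reduce Theorem~\ref{thm:RESHETO} directly to Lemma~\ref{lem:resheto} via the binary expansion of the weights, followed by a union bound over residues.

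First, I would exploit the identification between weights and subsets. Write each weight in binary as
\begin{align*}
w_j = \sum_{i=0}^{k} 2^i b_{ij}, \qquad b_{ij}\in\zoo,
\end{align*}
and set $S_i = \{j : b_{ij}=1\}\subseteq\oneton$ for $i=0,1,\dots,k.$ When $w_1,\dots,w_n$ are drawn uniformly and independently from $\{0,1,\dots,2^{k+1}-1\},$ the bits $b_{ij}$ are i.i.d.\ uniform on $\zoo,$ which is precisely the distribution that makes $S_0,S_1,\dots,S_k$ i.i.d.\ uniform on $\Pcal(\oneton).$ Moreover,
\begin{align*}
\sum_{j=1}^n w_j x_j
=\sum_{j=1}^n\sum_{i=0}^k 2^i b_{ij} x_j
=\sum_{i=0}^k 2^i \sum_{j\in S_i} x_j,
\end{align*}
so the function $f_s$ defined in (\ref{eqn:f_s}) is literally the function $f$ of Lemma~\ref{lem:resheto} under this reparametrization.

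Next I would apply Lemma~\ref{lem:resheto} once for each residue. Since $f_s$ depends on $s$ only modulo $2^{k+1},$ there are at most $2^{k+1}$ distinct functions $f_s$ as $s$ ranges over $\ZZ,$ indexed by $s\in\{0,1,\dots,2^{k+1}-1\}.$ For each such fixed $s,$ Lemma~\ref{lem:resheto} yields
\begin{align*}
\ABS{\hat f_s(T) - \frac{\delta_{T,\emptyset}}{2^{k+1}}}
\leq 2^{-\zeta n}, \qquad \abs{T}\leq\epsilon n,
\end{align*}
with failure probability at most $(k+1)2^{-n+H(\epsilon)n+2\zeta n}.$ A union bound over the $2^{k+1}$ residues then gives failure probability at most $(k+1)2^{-n+H(\epsilon)n+2\zeta n+k+1},$ which is exactly the bound claimed in the theorem.

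I do not anticipate a real obstacle: the work has already been done in Lemma~\ref{lem:resheto}, and what remains is the simple bookkeeping of (i)~observing that the product distribution on the weights matches the product distribution on subsets used there, and (ii)~paying a factor of $2^{k+1}$ in the failure probability to handle all residues simultaneously. The mild subtlety to double-check is that the distribution on the $S_i$ is genuinely uniform and independent under the proposed bijection, which is immediate from the independence of the bits $b_{ij}$ across both $i$ and $j.$
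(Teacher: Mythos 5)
Your proposal is correct and matches the paper's own proof essentially line for line: both reparametrize the uniformly random weights $w_1,\dots,w_n$ as uniformly and independently chosen sets $S_0,\dots,S_k\in\Pcal(\oneton)$ via the binary expansion $w_j=\sum_i 2^i b_{ij}$, then invoke Lemma~\ref{lem:resheto} for each residue $s\in\{0,1,\dots,2^{k+1}-1\}$ and apply a union bound to pick up the extra factor of $2^{k+1}$. Your write-up simply spells out the bijection and the product-distribution check in more detail than the paper bothers to.
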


\begin{proof}
In view of the modular counting in (\ref{eqn:f_s}), it
suffices to prove the theorem for
$s\in\{0,1,\dots,2^{k+1}-1\}.$ The functions $f_s$ have
the following equivalent definition: pick sets
$S_0,S_1,\dots,S_k\in\Pcal(\oneton)$ uniformly at
random and define 
\begin{align*}
f_s(x)=1\qquad\Leftrightarrow\qquad \sum_{i=0}^k2^i\sum_{j\in
S_i} x_j \equiv s\pmod{2^{k+1}}.
\end{align*}
The proof is now complete by
Lemma~\ref{lem:resheto} and the union bound over $s.$
\end{proof}

\section{Zeroing out Correlations by a Change of Distribution}
\label{sec:change-of-distribution}

Recall the setting of the previous section, where we
fixed integers $w_1,w_2,\dots,w_n$ at random from a
suitable range and analyzed the sum
$
\sum_{i=1}^n w_ix_i
$
as $x$ ranged over $\zoon.$ We showed that the subset
$X_s\subset\zoon$ that induces any particular sum $s$
modulo $2^{\Theta(n)}$ is relatively large and that
membership in $X_s$ has \emph{almost} zero correlation
with any given polynomial of low degree.  For the
purposes of this paper, the correlations with
low-degree polynomials need to be \emph{exactly} zero.
In this section we show that, with respect to a
suitable distribution $\mu_s$ on each $X_s,$ membership
in $X_s$ will indeed have zero correlation with any
low-degree polynomial.

A starting point in our discussion is a general
statement on zeroing out the correlations of given
Boolean functions $\chi_1,\chi_2,\dots,\chi_k$ with
another Boolean function $f.$ Recall that for functions
$f,g\colon X\to\Re$ on a finite set $X,$ we use the
notation
\begin{align*}
\langle f, g\rangle = \frac1{\abs X} \sum_{x\in X} f(x) g(x).
\end{align*}

\begin{theorem}
\label{thm:distribution-by-inversion}
Let $f,\chi_1,\dots,\chi_k\colon X\to\moo$ be given
functions on a finite set $X.$ Suppose that
\begin{align}
&\sum_{\substack{i=1}}^{k} \abs{\langle f,\chi_i\rangle}
<\frac12,
\label{eqn:f-chi-bounded}\\
&\sum_{\substack{j=1\\j\ne i}}^{k} \abs{\langle \chi_i,\chi_j\rangle} \leq
\frac12, &&i=1,2,\dots,k.
\label{eqn:diag-dominance}\\
\intertext{Then there exists a probability distribution $\mu$ on $X$ such
that}
&\Exp_\mu\,[f(x)\chi_i(x)] = 0, &&i=1,2,\dots,k.
\nonumber
\end{align}
\end{theorem}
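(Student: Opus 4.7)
The plan is to search for $\mu$ in the one-parameter family
\begin{align*}
\mu(x) = \frac{1}{Z}\PARENS{1 - \sum_{j=1}^k \alpha_j f(x)\chi_j(x)},
\end{align*}
where $\alpha_1,\dots,\alpha_k\in\Re$ are coefficients to be chosen and $Z$ is a normalizing constant. First I would expand the target condition $\Exp_\mu[f\chi_i]=0.$ Using the identity $f(x)^2\equiv 1,$ this unpacks to the linear system
\begin{align*}
\langle f,\chi_i\rangle \;=\; \sum_{j=1}^k \alpha_j\langle\chi_i,\chi_j\rangle, \qquad i=1,2,\dots,k,
\end{align*}
i.e., $A\alpha=b,$ where $A_{ij}=\langle\chi_i,\chi_j\rangle$ and $b_i=\langle f,\chi_i\rangle.$ Since $\chi_i(x)\in\moo$ forces $A_{ii}=1,$ hypothesis~(\ref{eqn:diag-dominance}) says that $A$ is strictly diagonally dominant. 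By Theorem~\ref{thm:gershgorin} it is then invertible, and I would define $\alpha=A^{-1}b.$

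The main obstacle is verifying that this candidate is actually a probability distribution, i.e., that the numerator is pointwise nonnegative. Since each term $f(x)\chi_j(x)\in\moo,$ pointwise nonnegativity reduces to the uniform bound $\|\alpha\|_1<1.$ To establish this I would decompose $A=I+E,$ where by~(\ref{eqn:diag-dominance}) the off-diagonal matrix $E$ satisfies $\|E\|_\infty\leq 1/2.$ Summing the Neumann series then yields $\|A^{-1}\|_\infty\leq\sum_{t\geq 0}(1/2)^t=2.$ Symmetry of $A$ (and hence of $A^{-1}$) converts this into the matching induced $L^1$ bound via $\|A^{-1}\|_1=\|(A^{-1})^T\|_\infty=\|A^{-1}\|_\infty\leq 2.$ Combining with~(\ref{eqn:f-chi-bounded}), which asserts $\|b\|_1<1/2,$ I obtain $\|\alpha\|_1\leq\|A^{-1}\|_1\cdot\|b\|_1<1,$ as required.

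With $\|\alpha\|_1<1$ in hand, the numerator $1-\sum_j\alpha_j f(x)\chi_j(x)$ is strictly positive at every $x\in X,$ so $Z>0$ and $\mu$ is a bona fide probability distribution on $X.$ By construction, the system $A\alpha=b$ is precisely the statement that $\sum_{x\in X}(1-\sum_j\alpha_j f(x)\chi_j(x))f(x)\chi_i(x)=0$ for each $i,$ which after dividing by $Z$ is exactly $\Exp_\mu[f(x)\chi_i(x)]=0.$ This would complete the argument.
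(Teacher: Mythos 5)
Your proof is correct and follows essentially the same route as the paper: solve the strictly diagonally dominant system $A\alpha=b$, show $\|\alpha\|_1<1$, and take $\mu\propto 1-f\sum_j\alpha_j\chi_j$. The only variation is cosmetic---where the paper bounds $\|\alpha\|_1$ by a direct summation of the inequalities $2|\alpha_i|-\sum_j|\alpha_j\langle\chi_i,\chi_j\rangle|\le|\langle f,\chi_i\rangle|$, you package the same diagonal-dominance estimate as a Neumann-series bound $\|A^{-1}\|_\infty\le 2$ plus symmetry of $A^{-1}$ to pass to the $\|\cdot\|_1$ norm.
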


\begin{remark}
A comment is in order on the hypothesis of
Theorem~\ref{thm:distribution-by-inversion}. The
theorem states that if $\chi_1,\chi_2,\dots,\chi_k$
each have a small correlation with $f$ and, in
addition, have small pairwise correlations, then a
distribution exists with respect to which $f$ is
completely uncorrelated with
$\chi_1,\chi_2,\dots,\chi_k.$ The latter part of the
hypothesis, namely the requirement
(\ref{eqn:diag-dominance}) of small pairwise
correlations for $\chi_1,\chi_2,\dots,\chi_k,$ may seem
unnecessary at first.  In actuality, it is vital.
Exponential lower bounds on the weights of linear
perceptrons~\cite{myhill-kautz61, siu91small-weights}
imply, by linear programming duality, the existence of
functions $f,\chi_1,\chi_2,\dots,\chi_k\colon X\to\moo$
such that $\abs{\langle f,\chi_i\rangle} =
\exp\{-\Theta(k)\},$ $i=1,2,\dots,k,$ and yet
\begin{align}
f(x) \equiv \sign \PARENS{\sum_{i=1}^k \alpha_i\chi_i(x)}
\label{eqn:linear-comb}
\end{align}
for some fixed reals $\alpha_1,\dots,\alpha_k.$ In this
construction, the correlation of $f$ with each $\chi_i$
is small, in fact exponentially smaller than
what is assumed in
Theorem~\ref{thm:distribution-by-inversion};
nevertheless, the representation
(\ref{eqn:linear-comb}) rules out a distribution $\mu$
with respect to which $f$ could have zero correlation
with each $\chi_i,$ for such a distribution $\mu$ would
have to obey
\begin{align*}
0<\Exp_\mu\left[\left|\sum_{i=1}^k
\alpha_i\chi_i(x)\right|\right]=\Exp_\mu\left[f(x)\sum_{i=1}^k
\alpha_i\chi_i(x)\right]=\sum_{i=1}^k\alpha_i\Exp_\mu[f(x)\chi_i(x)]=0.
\end{align*}
\end{remark}

\begin{proof}[Proof of
Theorem~\textup{\ref{thm:distribution-by-inversion}}.]
Consider the linear system
\begin{align}
M\alpha = \gamma
\label{eqn:lin-system}
\end{align}
in the unknown $\alpha\in\Re^k,$ where 
$M=[\langle \chi_i,\chi_j\rangle]_{i,j}$ is a matrix of order $k$ and
$\gamma = (\langle f,\chi_1\rangle, \dots, \langle f,\chi_k\rangle)\in\Re^k.$
Then (\ref{eqn:diag-dominance}) shows that $M$ is strictly diagonally
dominant and hence nonsingular by Theorem~\ref{thm:gershgorin}.
Fix the unique solution $\alpha$ to the system (\ref{eqn:lin-system}).
Then $2\abs{\alpha_i} - \sum_{j=1}^{k} \abs{\alpha_j\langle
\chi_i,\chi_j\rangle} \leq \abs{\langle f,\chi_i\rangle}$ for
$i=1,2,\dots,k.$ Summing
these $k$ inequalities, we obtain
\begin{align*}
2\sum_{i=1}^k \abs{\alpha_i} - \sum_{j=1}^{k}\abs{\alpha_j}
\sum_{i=1}^k\abs{\langle \chi_i,\chi_j\rangle} \leq
\sum_{i=1}^k\abs{\langle f, \chi_i\rangle}, 
\end{align*}
which in view of (\ref{eqn:f-chi-bounded}) and
(\ref{eqn:diag-dominance}) shows that $\sum_{i=1}^k\abs{\alpha_i}<1.$
Therefore, the function $\mu\colon X\to\Re$ given by
\begin{align*}
\mu(x) = \epsilon\PARENS{
 1 - f(x)\sum_{i=1}^k \alpha_i \chi_i(x)
 }
\end{align*}
is a probability distribution on $X$ for a suitable normalizing
factor $\epsilon>0.$ At last, 
\begin{align*}
\Exp_{\mu}\,[f(x)\chi_i(x)] =
\epsilon \abs{X} \left(
\langle f,\chi_i\rangle - \sum_{j=1}^k \alpha_j\langle
\chi_i,\chi_j\rangle \right) = 0,
\end{align*}
where the final equality holds by (\ref{eqn:lin-system}).
\end{proof}

We are now in a position to prove the main result of this
section.

\begin{theorem}
\label{thm:mu_s-constructed}
Let $\alpha>0$ be a sufficiently small absolute constant.
Choose integers $w_1,w_2,\dots,w_n$ uniformly at random from
$\{0,1,\dots,2^{\lfloor\alpha n\rfloor+1}-1\}.$ For $s\in\ZZ,$ 
define
\begin{align}
\label{eqn:X_s}
X_s = \left\{
 x\in\zoon :  \sum_{i=1}^n w_ix_i \equiv s \pmod{2^{\lfloor\alpha
 n\rfloor+1}}\;
     \right\}. 
\end{align}
Then with probability at least $1-\e^{-n/3}$ over the choice of
$w_1,w_2,\dots,w_n,$ there is a distribution $\mu_s$
on $X_s$ $($for each $s)$ such that
\begin{align}
\label{eqn:mu_s-equivalence}
\Exp_{\mu_s}\, [p(x)] = \Exp_{\mu_t}\, [p(x)]
\end{align}
for any $s,t\in\ZZ$ and any polynomial $p$ of degree at most 
$\lfloor \alpha n\rfloor.$
\end{theorem}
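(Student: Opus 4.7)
The plan is to construct each $\mu_s$ so that $\Exp_{\mu_s}[\chi_T]=\delta_{T,\emptyset}$ for every $T\subseteq\oneton$ with $|T|\leq\lfloor\alpha n\rfloor$. Since any polynomial $p$ of degree at most $\lfloor\alpha n\rfloor$ on $\zoon$ has a Fourier expansion $p(x)=\sum_{|T|\leq\lfloor\alpha n\rfloor}\tilde p(T)\chi_T(x)$, forcing these character expectations to $\delta_{T,\emptyset}$ yields $\Exp_{\mu_s}[p]=\tilde p(\emptyset)$, a value independent of $s$, which is exactly~(\ref{eqn:mu_s-equivalence}).

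For each $s$ I would produce $\mu_s$ on $X_s$ by a direct application of Theorem~\ref{thm:distribution-by-inversion}, taking the ground set $X=X_s$, the function $f\equiv 1$, and letting $\chi_i$ range over the characters $\chi_T$ for $\emptyset\neq T$ with $|T|\leq\lfloor\alpha n\rfloor$. The hypotheses of that theorem reduce to controlling the inner products
\begin{align*}
\langle 1,\chi_T\rangle_{X_s}=\frac{\hat f_s(T)}{\hat f_s(\emptyset)},
\qquad
\langle\chi_{T_i},\chi_{T_j}\rangle_{X_s}=\frac{\hat f_s(T_i\oplus T_j)}{\hat f_s(\emptyset)},
\end{align*}
with $f_s$ as in Theorem~\ref{thm:RESHETO}. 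Because $T_i\oplus T_j$ can have size up to $2\lfloor\alpha n\rfloor$, I would invoke Theorem~\ref{thm:RESHETO} with $k=\lfloor\alpha n\rfloor$, $\epsilon=2\alpha$, and $\zeta$ chosen slightly larger than $H(\alpha)+\alpha$. Its conclusion yields $|\hat f_s(T)-2^{-k-1}\delta_{T,\emptyset}|\leq 2^{-\zeta n}$ uniformly over $|T|\leq 2\alpha n$ and over all $s$, so each inner product displayed above is $O(2^{k+1-\zeta n})$ in magnitude; summing over the at most $2^{H(\alpha)n}$ nonempty $T$ of size $\leq\alpha n$ then verifies both~(\ref{eqn:f-chi-bounded}) and~(\ref{eqn:diag-dominance}).

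The failure probability from Theorem~\ref{thm:RESHETO} is at most $(k+1)\,2^{-n+H(2\alpha)n+2\zeta n+k+1}=O(n)\,2^{-(1-H(2\alpha)-2H(\alpha)-3\alpha-o(1))n}$, which is bounded by $\e^{-n/3}$ once $\alpha$ is a sufficiently small absolute constant, since $H(2\alpha),H(\alpha)\to 0$ as $\alpha\to 0$. The main obstacle I anticipate is the tight coupling of parameters rather than any conceptual leap: the off-diagonal condition~(\ref{eqn:diag-dominance}) demands control of $\hat f_s$ at sets of size up to $2\lfloor\alpha n\rfloor$, forcing the choice $\epsilon=2\alpha$; the normalization $\hat f_s(\emptyset)\approx 2^{-(k+1)}$ costs an extra factor $2^{\alpha n+1}$ in each inner-product bound; and the counting factor $2^{H(\alpha)n}$ from the number of low-degree characters must also be absorbed. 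Balancing these three competing quantities against the failure-probability exponent of Theorem~\ref{thm:RESHETO} pins down both how small $\alpha$ must be and how close $\zeta$ can be taken to $H(\alpha)+\alpha$.
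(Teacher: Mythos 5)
Your proposal is correct and follows essentially the same route as the paper's own proof: invoke Theorem~\ref{thm:RESHETO} to get uniform control on $\hat f_s(T)$ for $|T|\leq 2\alpha n$ (needed because symmetric differences of two sets of size $\leq\alpha n$ can reach $2\alpha n$), then feed the resulting inner-product bounds on $X_s$ into Theorem~\ref{thm:distribution-by-inversion} to zero out $\hat\mu_s(T)$ for all nonempty $|T|\leq\alpha n$, forcing the $\mu_s$ to share their low-order Fourier spectrum. The only divergence is cosmetic: the paper simply fixes $\zeta=1/5$ rather than taking $\zeta$ slightly above $H(\alpha)+\alpha$, but for $\alpha$ small both choices satisfy the competing constraints (the sum of inner products below $1/2$ and the failure probability below $e^{-n/3}$), so this is a matter of bookkeeping convenience rather than substance.
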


\begin{proof}
Let $\alpha>0$ be sufficiently small.  We will assume throughout the proof 
that $n\geq 1/\alpha,$ the theorem being trivial
otherwise.  Set $\epsilon=2\alpha,$\; $\zeta=1/5,$\;
and $k=\lfloor \alpha n\rfloor$ in
Theorem~\ref{thm:RESHETO}. Then with probability at least
$1-\e^{-n/3}$ over the choice of $w_1,w_2,\dots,w_n,$ one has
\begin{align}
&\ABS{\hat f_s(T) - \frac{\delta_{T,\emptyset}}{2^{\lfloor \alpha
n\rfloor+1}}} \leq
2^{-n/5}, 
		&&\abs{T}\leq 2\alpha n,\quad s\in\ZZ,
		\label{eqn:f_s-guarantee}
\end{align}
where $f_s\colon\zoon\to\zoo$ is given by
$f_s(x)=1\Leftrightarrow x\in X_s.$ It follows that for each $s,$
\begin{align}
\label{eqn:X_s-size}
\abs{X_s} = 2^n\hat f_s(\emptyset) \geq
2^n(2^{-\lfloor\alpha n\rfloor-1}-2^{-n/5}).
\end{align}

For $f,g\colon\zoon\to\Re,$ we will write
$\langle f,g\rangle_{X_s} = \abs{X_s}^{-1}\sum_{x\in X_s}f(x)g(x).$
Let $\Scal\subset\Pcal(\oneton)$ be the system of nonempty
subsets of at most $\alpha n$ elements. Fix any $T\in \Scal.$
Then for each $s,$
\begin{align}
\sum_{\substack{S\in\Scal\\S\ne T}} \abs{\langle \chi_S,\chi_T\rangle_{X_s}}
=\frac{2^{n}}{\abs{X_s}}
   \sum_{\substack{S\in\Scal\\S\ne T}} \abs{\hat f_s(S\oplus T)}
\leq \frac{2^{n}}{\abs{X_s}} \cdot\abs{\Scal}\, 2^{-n/5} <\frac12,
\label{eqn:cross-correlations}
\end{align}
where the final two inequalities follow from
(\ref{eqn:entropy-bound}), (\ref{eqn:f_s-guarantee}), and (\ref{eqn:X_s-size}).
Similarly, for each $s,$
\begin{align}
\sum_{S\in\Scal} \abs{\langle f_s,\chi_S\rangle_{X_s}}
=\frac{2^{n}}{\abs{X_s}}
   \sum_{S\in\Scal} \abs{\hat f_s(S)}
\leq \frac{2^{n}}{\abs{X_s}} \cdot\abs{\Scal}\, 2^{-n/5} <\frac12.
\label{eqn:correlations-with-f_s}
\end{align}
In view of (\ref{eqn:cross-correlations}) and 
(\ref{eqn:correlations-with-f_s}), 
Theorem~\ref{thm:distribution-by-inversion} provides a
distribution $\mu_s$ on $\zoon$ that is supported on $X_s$
and obeys $\hat\mu_s(S) = 0$ for $S\in\Scal.$
Since $\mu_s$ is a probability distribution, we additionally
have $\hat\mu_s(\emptyset)=2^{-n}$ for all $s.$ In particular, the 
distributions $\mu_s$ have identical Fourier
spectra up to coefficients of order $\alpha n,$
which is another way of stating (\ref{eqn:mu_s-equivalence}).
\end{proof}

\section{Reduction to a Univariate Problem}
\label{sec:univariate-reduction}

Recall from the Introduction that the crux of our proof
is to establish the existence of a halfspace
$f\colon\zoon\to\moo$ that requires a rational function
of degree $\Theta(n)$ for pointwise approximation
within $1/3.$ The purpose of this section is to reduce
this task, for a suitably chosen random halfspace, to a
univariate problem. The univariate problem pertains to
the uniform approximation of the sign function on the
set $\{\pm1,\pm2,\pm3,\dots,\pm2^{\Theta(n)}\}$ and has
been solved in previous work. Key to this univariate
reduction will be the construction of probability
distributions in the previous two sections.

\newcommand{\esses}{\pm1,\pm2,\pm 3,\dots,\pm2^k}

\begin{theorem}[\rm Reduction to a univariate problem]
\label{thm:univariate-reduction-symm}
Put $k=\lfloor\alpha n\rfloor,$ where $\alpha>0$ is the absolute
constant from Theorem~\textup{\ref{thm:mu_s-constructed}}.
Choose $w_1,w_2,\dots,w_n$ uniformly at random from 
$\{0,1,\dots,2^{k+1}-1\}.$ Define
$f\colon\zoon\times\{0,1,2,\dots,n\}\to\moo$ by 
\begin{align*}
f(x,t) = \sign \left( \frac12 + \sum_{i=1}^n w_ix_i - 2^{k+1}t\right).
\end{align*}
Then with probability at least $1-\e^{-n/3}$ over the choice of
$w_1,w_2,\dots,w_n,$ one has 
\begin{align}
R^+(f,d)\geq R^+(\{\esses\},d), && d=0,1,\dots,k.
\end{align}
\end{theorem}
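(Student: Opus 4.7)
The plan is a direct pullback. Given a degree-$d$ rational approximant $p(x,t)/q(x,t)$ of $f$ with error $R$ and $q>0$ on the domain, for each target $z\in\{\pm1,\pm2,\dots,\pm2^k\}$ I will construct a probability distribution $\nu_z$ on $\zoon\times\{0,1,\dots,n\}$ whose support lies in $\{(x,t):\sum_i w_ix_i-2^{k+1}t=z\}$ and whose pushforwards of $p$ and $q$ are polynomials in $z$ of degree at most $d$. On that support $f\equiv\sign(z)$, so averaging $|fq-p|\leq Rq$ against $\nu_z$ will yield a degree-$d$ rational approximant of $\sign$ on $\{\pm1,\dots,\pm2^k\}$ with the same error $R$, establishing the theorem.

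Concretely, let $s_z\in\{0,1,\dots,2^{k+1}-1\}$ be the residue of $z$ modulo $2^{k+1}$, and for $x\in X_{s_z}$ put $t(x,z):=(\sum_i w_ix_i-z)/2^{k+1}$. The congruence $\sum_i w_ix_i\equiv z\pmod{2^{k+1}}$ on $X_{s_z}$ forces $t(x,z)\in\ZZ$, and the bounds $0\leq\sum_i w_ix_i\leq n(2^{k+1}-1)$ together with $|z|\leq 2^k$ confirm $t(x,z)\in\{0,1,\dots,n\}$. Let $\nu_z$ be the distribution that places mass $\mu_{s_z}(x)$ on $(x,t(x,z))$ for each $x\in X_{s_z}$, using the $\mu_{s_z}$ supplied by Theorem~\ref{thm:mu_s-constructed}. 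By construction $\sum_i w_ix_i-2^{k+1}t(x,z)=z$ throughout the support, so $f=\sign(z+\tfrac12)=\sign(z)$ there; averaging the basic inequality $|f(x,t)q(x,t)-p(x,t)|\leq R\cdot q(x,t)$ against $\nu_z$ then yields
\begin{align*}
|\sign(z)\cdot Q_z - P_z|\leq R\cdot Q_z,\qquad
P_z:=\Exp_{\nu_z}[p(x,t)],\quad Q_z:=\Exp_{\nu_z}[q(x,t)],
\end{align*}
with $Q_z>0$.

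The technical heart is the degree count for $P_z$ and $Q_z$. Because $t(x,z)$ is affine in $(x,z)$, substituting $t\mapsto t(x,z)$ into $p(x,t)$ (a polynomial of total degree $\leq d$) produces a polynomial in $(x,z)$ of total degree $\leq d$; grouping by powers of $z$ rewrites it as $\sum_{m=0}^{d}z^m\, r_m(x)$ with $\deg r_m\leq d-m\leq k$. The defining Fourier property of $\mu_s$ from Theorem~\ref{thm:mu_s-constructed}---namely $\hat\mu_s(S)=0$ for $0<|S|\leq k$---makes $\Exp_{\mu_{s_z}}[r_m(x)]$ equal the uniform mean of $r_m$ and hence independent of $s_z$. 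Therefore $P_z=\sum_m z^m\,\Exp_{\mu_s}[r_m(x)]$ is a polynomial in $z$ of degree $\leq d$, and the identical argument applies to $Q_z$. The ratio $P_z/Q_z$ thus furnishes a degree-$d$ rational approximant of $\sign(z)$ on $\{\pm1,\dots,\pm2^k\}$ with error at most $R$, giving $R^+(\{\pm1,\dots,\pm2^k\},d)\leq R^+(f,d)$. I expect the only genuine technical check to be the arithmetic verification that $t(x,z)\in\{0,\dots,n\}$ for every $x\in X_{s_z}$ and every $z$ in the target set; the substitution being degree-preserving and the low-degree moments of $\mu_s$ being $s$-independent are both immediate from the preceding sections.
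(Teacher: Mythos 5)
Your proposal is correct and follows essentially the same route as the paper: you pull back a multivariate rational approximant along the affine substitution $t\mapsto\ell(x,z)=2^{-k-1}(\sum_i w_ix_i - z)$, push forward the distributions $\mu_{s_z}$ of Theorem~\ref{thm:mu_s-constructed}, and use the fact that those distributions agree on all polynomials of degree at most $k$ to conclude that the averaged numerator and denominator are degree-$d$ univariate polynomials in $z$. Your write-up is slightly more explicit than the paper's (you spell out the grouping $\sum_m z^m r_m(x)$ with $\deg r_m\le d-m$, and you distinguish the target $z$ from its residue $s_z$, where the paper overloads the single letter $s$), but the ideas and the order of steps match.
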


\begin{proof}
For $s=\esses,$ define $X_s\subseteq\zoon$ by (\ref{eqn:X_s}).
Then by Theorem~\ref{thm:mu_s-constructed}, with probability at
least $1-\e^{-n/3}$ there is a distribution $\mu_s$ on $X_s$ for
each $s$ such that 
\begin{align}
\label{eqn:mu_s-equivalence-restated}
\Exp_{\mu_s}\, [p(x)] = \Exp_{\mu_r}\, [p(x)]
\end{align}
for any $s,r\in\{\esses\}$ and any polynomial $p$ of
degree no greater than $k.$ In the remainder of the
proof, we will work with a fixed choice of weights
$w_1,w_2,\dots,w_n$ for which the described
distributions $\mu_s$ exist.

Suppose that
$R^+(f,d)<\epsilon$ where $0<\epsilon<1$ and $0\leq d\leq k.$
Then there are degree-$d$ polynomials $p,q$ on
$\Re^n\times \Re$ such that on the domain of $f,$
\begin{align}
0<(1-\epsilon)q(x,t)\leq p(x,t)f(x,t)\leq (1+\epsilon)q(x,t).
\label{eqn:multivariate-approximant}
\end{align}
On the support of $\mu_s$ (for $s=\esses$), the linear form
\begin{align*}
\ell(x,s) = 2^{-k-1}\PARENS{\sum_{i=1}^nw_ix_i-s}
\end{align*}
obeys $\ell(x,s)\in\{0,1,2,\dots,n\}$ and
$f(x,\ell(x,s))=\sign s.$ 
Letting $t=\ell(x,s)$ in 
(\ref{eqn:multivariate-approximant}) and passing to
expectations,
\begin{align*}
0<\Exp_{x\sim\mu_s}
\left[q(x,\ell(x,s))\right]
(1-\epsilon)
\leq &\Exp_{x\sim\mu_s}
\left[p(x,\ell(x,s))\right]\sign s\\
\leq &\Exp_{x\sim\mu_s}
\left[q(x,\ell(x,s))\right]
(1+\epsilon).
\end{align*}
It follows from (\ref{eqn:mu_s-equivalence-restated}) that
$
\Exp_{\mu_s} [p(x,\ell(x,s))] = P(s)$ and 
$\Exp_{\mu_s}[q(x,\ell(x,s))] = Q(s)$
for some $P,Q\in P_d$ and all $s.$ As a result,
$R^+(\{\esses\},d)\leq\epsilon,$ the approximant in question being
$P/Q.$
\end{proof}

It remains to rewrite the previous theorem in terms of functions
on the hypercube $\zoo^{2n}$ rather than the set
$\zoon\times\{0,1,2,\dots,n\}.$

\begin{theorem}
\label{thm:univariate-reduction}
Put $k=\lfloor\alpha n\rfloor,$ where $\alpha>0$ is the absolute
constant from Theorem~\textup{\ref{thm:mu_s-constructed}}.
Choose $w_1,w_2,\dots,w_n$ uniformly at random from 
$\{0,1,\dots,2^{k+1}-1\}.$ Define
$f\colon\zoo^{2n}\to\moo$ by 
\begin{align*}
f(x) = \sign \left( \frac12 + \sum_{i=1}^n w_ix_i -
2^{k+1}\sum_{i=n+1}^{2n} x_i\right).
\end{align*}
Then with probability at least $1-\e^{-n/3}$ over the choice of
$w_1,w_2,\dots,w_n,$ one has 
\begin{align*}
R^+(f,d)\geq R^+(\{\esses\},d), && d=0,1,\dots,k.
\end{align*}
\end{theorem}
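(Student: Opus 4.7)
The plan is to deduce Theorem~\ref{thm:univariate-reduction} directly from Theorem~\ref{thm:univariate-reduction-symm} by applying the symmetrization identity of Proposition~\ref{prop:symm-rational}. The essential observation is that the function $f$ on $\zoo^{2n}$ depends on the last $n$ variables only through their sum: writing $y=(x_{n+1},\ldots,x_{2n})$, one has $\sum_{i=n+1}^{2n} x_i = |y|$. Therefore, if $g\colon \zoon\times\{0,1,\ldots,n\}\to\moo$ denotes the function $f$ of Theorem~\ref{thm:univariate-reduction-symm}, then
\[
f(x_1,\ldots,x_n,y) \;=\; g(x_1,\ldots,x_n,|y|), \qquad (x_1,\ldots,x_n,y)\in\zoo^{2n}.
\]

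Next I would view the domain $\zoo^{2n}$ as the product $\zoo^{n_1}\times\cdots\times\zoo^{n_{n+1}}$ with $n_1=\cdots=n_n=1$ and $n_{n+1}=n$. Since $|x_i|=x_i$ for singleton blocks $x_i\in\zoo$, the identity above rewrites as
\[
f(x_1,\ldots,x_n,y) \;=\; g(|x_1|,\ldots,|x_n|,|y|),
\]
placing us exactly in the hypothesis of Proposition~\ref{prop:symm-rational}. The proposition then yields $R^+(f,d) = R^+(g,d)$ for every $d$.

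Finally, note that the random weights $w_1,\ldots,w_n$ in the present theorem and in Theorem~\ref{thm:univariate-reduction-symm} are drawn from the same distribution on $\{0,1,\ldots,2^{k+1}-1\}^n$. Apply Theorem~\ref{thm:univariate-reduction-symm} to $g$: with probability at least $1-\e^{-n/3}$ over this common sample, one has $R^+(g,d)\geq R^+(\{\esses\},d)$ for all $d=0,1,\ldots,k$. Combining with the equality $R^+(f,d)=R^+(g,d)$ from the previous step gives the desired bound, under the same event. There is no real obstacle here: the content of the theorem is already contained in the symmetric version, and Proposition~\ref{prop:symm-rational} is precisely the packaged tool for transporting rational-approximation bounds from the reduced domain $\zoon\times\{0,1,\ldots,n\}$ to the full hypercube $\zoo^{2n}$.
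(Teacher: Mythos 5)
Your proposal is correct and coincides with the paper's proof, which simply cites Proposition~\ref{prop:symm-rational} together with Theorem~\ref{thm:univariate-reduction-symm}; you have merely spelled out the block decomposition $\zoo^{2n}=\zoo^{1}\times\cdots\times\zoo^{1}\times\zoo^{n}$ that makes the proposition applicable.
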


\begin{proof}
Immediate from Proposition~\ref{prop:symm-rational}
and Theorem~\ref{thm:univariate-reduction-symm}.
\end{proof}

\section{Main Result and Generalizations}
\label{sec:main}

We now combine the newly obtained result on rational
approximation with known results from Section~\ref{sec:prelim} to
prove the main theorem of this work.

\begin{theorem}[\rm Main result]
\label{thm:main-detailed}
Fix sufficiently small absolute constants $\alpha>0$ and
$\beta=\beta(\alpha)>0.$
Choose integers $w_1,w_2,\dots,w_n\in
\{0,1,\dots,2^{\lfloor\alpha n\rfloor+1}-1\}$ 
uniformly at random.
Then with probability at least $1-\e^{-n/3},$ 
the function $f\colon \zoo^{2n}\to\moo$ given by
\begin{align*}
f(x) = \sign\left(\frac12 + \sum_{i=1}^n w_ix_i -
2^{\lfloor\alpha n\rfloor +1}\sum_{i=n+1}^{2n} x_i\right)
\end{align*}
obeys
\begin{align}
\degthr(f\wedge f) \geq \lfloor \beta n\rfloor.
\label{eqn:main-beta}
\end{align}
\end{theorem}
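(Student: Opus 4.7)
The plan is to combine the univariate reduction of Theorem~\ref{thm:univariate-reduction} with the sharp lower bound on rational approximation of $\sign$ from Theorem~\ref{thm:R+sign}, and to translate the conclusion into a statement on threshold degree via Theorem~\ref{thm:sherstov-degthr-R}. Set $k=\lfloor \alpha n\rfloor.$ First, I would invoke Theorem~\ref{thm:univariate-reduction} to obtain, except on an event of probability at most $\e^{-n/3},$
\begin{align*}
R^+(f,d) \;\geq\; R^+(\{\pm 1,\pm 2,\ldots,\pm 2^k\},d), \qquad d=0,1,\ldots,k,
\end{align*}
and condition on this event henceforth. Note that $f$ is not identically false: evaluating at $x=(0,\ldots,0)$ gives $f=\sign(1/2)=+1.$ Hence Theorem~\ref{thm:sherstov-degthr-R} is applicable with $g=f.$

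Next, I would argue by contradiction. Suppose $d := \degthr(f\wedge f) < \lfloor \beta n\rfloor$ for some small $\beta = \beta(\alpha)>0$ to be fixed. Theorem~\ref{thm:sherstov-degthr-R} then gives $R^+(f,4d)+R^+(f,2d)<1.$ Since $R^+(f,\cdot)$ is monotonically non-increasing in the degree, $R^+(f,4d)\leq R^+(f,2d),$ so in particular $R^+(f,4d)<\tfrac12.$ Choosing $\beta \leq \alpha/8$ ensures $4d\leq 4\lfloor \beta n\rfloor \leq k$ for every sufficiently large $n,$ so that the conditioned reduction yields $R^+(f,4d)\geq R^+(\{\pm 1,\ldots,\pm 2^k\},4d).$ The first regime of Theorem~\ref{thm:R+sign}, applied with $n'=2^k$ and degree $4d \leq k = \log n',$ then gives
\begin{align*}
R^+(\{\pm 1,\ldots,\pm 2^k\},4d) \;\geq\; \exp\!\BRACES{-\Theta\!\PARENS{2^{-k/(8d)}}} \;\geq\; \exp\!\BRACES{-\Theta\!\PARENS{2^{-\alpha/(8\beta)}}}.
\end{align*}
By picking $\beta$ sufficiently small relative to $\alpha,$ the right-hand side can be driven above $\tfrac12,$ contradicting $R^+(f,4d)<\tfrac12$ and thereby forcing $\degthr(f\wedge f)\geq \lfloor \beta n\rfloor.$

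I expect the only delicate point to be the quantitative calibration of $\beta$ against $\alpha.$ Two constraints must be satisfied simultaneously: the arithmetic constraint $4\lfloor \beta n\rfloor \leq \lfloor \alpha n\rfloor$ (so the univariate reduction is available at degree $4d$), and the analytic constraint $\exp\{-\Theta(2^{-\alpha/(8\beta)})\} > \tfrac12$ (so the univariate sign function genuinely resists rational approximation at that degree). Both hold once $\beta$ is a sufficiently small constant multiple of $\alpha,$ and the finitely many small values of $n$ can be absorbed into the choice of $\beta$ so that the bound $\degthr(f\wedge f)\geq \lfloor \beta n\rfloor$ holds uniformly in $n.$
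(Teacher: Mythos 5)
Your proof is correct and follows essentially the same route as the paper's: condition on the event of Theorem~\ref{thm:univariate-reduction}, apply Theorem~\ref{thm:sherstov-degthr-R} with $g=f$ to deduce $R^+(f,4d)<1/2$, transfer to the univariate set $\{\pm1,\dots,\pm2^k\}$, and derive a contradiction from Theorem~\ref{thm:R+sign} for $\beta$ small enough relative to $\alpha$. The only differences are cosmetic --- you take $\beta\leq\alpha/8$ where the paper uses $\beta<\alpha/4$, and you spell out the invocation of the first regime of Theorem~\ref{thm:R+sign} which the paper leaves implicit.
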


\begin{proof}
Theorem~\ref{thm:univariate-reduction} shows that 
with probability at least $1-\e^{-n/3}$ over the choice of
$w_1,w_2,\dots,w_n,$ one has
\begin{align}
R^+(f,d)\geq
R^+(S,d),
    && d=0,1,\dots,\lfloor\alpha n\rfloor,
	\label{eqn:univariate-reduction}
\end{align}
where $S=\{\pm1,\pm2,\pm3,\dots,\pm2^{\lfloor\alpha n\rfloor}\}$
and $\alpha>0$ is the absolute constant from
Theorem~\ref{thm:mu_s-constructed}.  In the remainder of the
proof, we will condition on this event.

Suppose now that $\degthr(f\wedge f)<\lfloor \beta n\rfloor,$
where $\beta$ is a constant to be chosen later subject to
$0<\beta<\alpha/4.$ Then Theorem~\ref{thm:sherstov-degthr-R}
implies that $R^+(f,\lfloor4\beta n\rfloor)<1/2,$ which in view of
(\ref{eqn:univariate-reduction}) leads to $R^+(S,\lfloor4\beta
n\rfloor) <1/2.$ The last inequality violates
Theorem~\ref{thm:R+sign} for small enough $\beta>0.$ Thus,
(\ref{eqn:main-beta}) holds for $\beta$ small enough.
\end{proof}

Recall that the technical crux of this paper is an
optimal lower bound for the rational approximation of a
halfspace.  We will have occasion to appeal to this
result again, and for this reason we formulate it as a
theorem in its own right.

\begin{theorem}
\label{thm:R-hn}
A family of halfspaces $h_n\colon\zoon\to\moo,$
$n=1,2,3,\dots,$
exists such that
\begin{align}
R^+(h_n,d) = 1-\exp\BRACES{-\Theta\PARENS{\frac nd}}
, \qquad d=1,2,\dots,\Theta(n).
\label{eqn:R-hn}
\end{align}
\end{theorem}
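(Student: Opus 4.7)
The plan is to take the family $\{h_n\}$ to be (a trivial padding of) the random halfspaces $f\colon\zoo^{2m}\to\moo$ produced by Theorem~\ref{thm:univariate-reduction} with $m=\lfloor n/2\rfloor$. Since that construction succeeds with probability at least $1-\e^{-m/3}>0$, at least one halfspace $f$ in the support of the randomization satisfies the guarantee, and we take $h_n$ to be any such $f$ (with any extra coordinates given weight zero when $n$ is odd). This immediately gives $h_n$ as a halfspace on $\zoon$.

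For the lower bound in \eqref{eqn:R-hn}, I would invoke Theorem~\ref{thm:univariate-reduction} to obtain
\begin{align*}
R^+(h_n, d) \;\geq\; R^+\bigl(\{\pm1,\pm2,\dots,\pm 2^{\lfloor\alpha m\rfloor}\},\,d\bigr), \qquad d=0,1,\dots,\lfloor\alpha m\rfloor,
\end{align*}
and then apply the lower-bound half of Theorem~\ref{thm:R+sign} with $M=2^{\lfloor\alpha m\rfloor}=2^{\Theta(n)}$. For $1\leq d\leq\log M$, this yields $R^+(\{\pm1,\dots,\pm M\},d)\geq \exp\{-\Theta(M^{-1/(2d)})\}$. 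Since $M^{-1/(2d)}=\exp\{-\Theta(n/d)\}$, the elementary bound $e^{-y}\geq 1-y$ gives $R^+(h_n,d)\geq 1-\exp\{-\Theta(n/d)\}$ on the whole range $1\leq d\leq\Theta(n)$.

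For the matching upper bound, I would exploit that the affine form $\ell(x)=\tfrac12+\sum_{i=1}^m w_ix_i-2^{\lfloor\alpha m\rfloor+1}\sum_{i=m+1}^{2m} x_i$ defining $h_n$ takes values in half-odd integers, so that $2\ell(x)$ is always an odd integer of absolute value at most $W=2^{O(n)}$ as $x$ ranges over $\zoo^{2m}$. Given any univariate rational $p(t)/q(t)$ of degree $d$ with $q>0$ on $\{\pm1,\dots,\pm W\}$, the substitution $t\mapsto 2\ell(x)$ (which is affine in $x$) produces a multivariate rational approximant $p(2\ell(x))/q(2\ell(x))$ of the same degree, with positive denominator on the hypercube, and with the same worst-case error, because $\sign(\ell(x))=\sign(2\ell(x))=h_n(x)$. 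Hence
\begin{align*}
R^+(h_n,d) \;\leq\; R^+\bigl(\{\pm 1,\pm 2,\dots,\pm W\},\,d\bigr),
\end{align*}
and the upper half of Theorem~\ref{thm:R+sign} bounds the right side by $\exp\{-W^{-1/d}\}=1-\Theta(\exp\{-\Theta(n/d)\})$.

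The only real obstacle is bookkeeping: one must verify that the matching expressions $1-\exp\{-\Theta(n/d)\}$ emerge with consistent constants across the two regimes $n/d=\Theta(1)$ (where $R^+(h_n,d)$ lies in a fixed subinterval of $(0,1)$) and $n/d\to\infty$ (where $R^+(h_n,d)\to 1$). The two-sided estimates $1-e^{-y}\in[y/2,y]$ for $y\in(0,1]$ and $1-e^{-y}\in[1-e^{-1},1)$ for $y\geq 1$ make this routine, completing the proof for $d=1,2,\dots,\Theta(n)$.
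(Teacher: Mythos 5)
Your proof is correct and follows essentially the same route as the paper. The lower bound is identical: invoke Theorem~\ref{thm:univariate-reduction} to pass to $R^+(\{\pm1,\dots,\pm 2^{\lfloor\alpha m\rfloor}\},d)$ and read off the asymptotics from Theorem~\ref{thm:R+sign}. For the upper bound the paper instead records that $R^+(h_n,1)<1-\exp\{-\Theta(n)\}$ (by rescaling the linear form into $[-1,1]$) and then composes with Newman's approximant for $\sign$ on $[-1,-\xi]\cup[\xi,1]$ with $\xi=\exp\{-\Theta(n)\}$; you substitute the (unscaled) linear form $2\ell(x)$ directly into the discrete univariate approximant from Theorem~\ref{thm:R+sign}. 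These are the same idea: in both cases a degree-$1$ polynomial maps the cube into a ``sign-gap'' univariate domain, and a degree-$d$ univariate rational approximant is pulled back through it. Your discrete version is slightly more self-contained, since it reuses only Theorem~\ref{thm:R+sign} rather than also quoting Newman's continuous bound; the paper's version factors the construction through the intermediate quantity $R^+(h_n,1)$, which makes the same argument reusable in Remark~\ref{rem:h-majn}. One cosmetic point: you state that the substitution yields ``the same worst-case error,'' but since $2\ell(x)$ hits only a subset of $\{\pm1,\dots,\pm W\}$, the correct claim is ``at most the same error,'' which is of course all that is needed.
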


\begin{proof}
The lower bound in (\ref{eqn:R-hn})
is immediate from Theorem~\ref{thm:univariate-reduction} 
and the univariate lower bounds in Theorem~\ref{thm:R+sign}.

Next, every halfspace $h_n\colon\zoon\to\moo$
constructed in Theorem~\ref{thm:univariate-reduction} trivially obeys
$R^+(h_n,1)<1-\exp\{-\Theta(n)\}.$ For $0<\xi<1,$ Newman's
classical work~\cite{newman64rational} shows that
$R^+([-1,-\xi]\cup[\xi,1],d)\leq 1-\xi^{\Theta(1/d)},$
whence by composition of the approximants one obtains the upper
bound in (\ref{eqn:R-hn}).
\end{proof}

\subsection*{Mixed intersection.} 
Theorem~\ref{thm:main-detailed} shows that the
intersection of two halfspaces has the asymptotically 
highest threshold degree. At the same time,
Beigel et al.~\cite{beigel91rational} showed that the
intersection of a constant number of majority functions
on $\zoon,$ which are particularly simple halfspaces,
has threshold degree $O(\log n).$ We now derive a lower
bound of $\Omega(\sqrt{n \log n})$ on the threshold
degree of the intersection of a halfspace and a
majority function, which improves quadratically on the
previous bound in~\cite{sherstov09hshs} and essentially
matches the upper bound, $O(\sqrt n\log n),$ given
below in Remark~\ref{rem:h-majn}.

\begin{theorem}
\label{thm:hmajn}
A family of halfspaces $h_n\colon\zoon\to\moo,$
$n=1,2,3,\dots,$
exists such that
\begin{align}
\degthr(h_n\wedge \MAJ_n) = \Theta(\sqrt {n\log n}).
\label{eqn:degthr-h-maj}
\end{align}
\end{theorem}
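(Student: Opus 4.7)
The plan is to take $h_n$ to be the family of halfspaces produced by Theorem~\ref{thm:R-hn}, whose sharp rational approximation rate $R^+(h_n,d) = 1 - \exp\{-\Theta(n/d)\}$ (valid for $1 \leq d \leq \Theta(n)$) I will play off against the rational approximation rate of $\MAJ_n$ from Theorem~\ref{thm:approx-maj}. Both directions of~\eqref{eqn:degthr-h-maj} then reduce, via the two characterizations of threshold degree of a conjunction (Theorems~\ref{thm:rational-is-possible} and~\ref{thm:sherstov-degthr-R}), to a calculation with these two rates.

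For the upper bound, I would set $d = C\sqrt{n\log n}$ with $C$ a sufficiently large absolute constant. Theorem~\ref{thm:R-hn} gives $1 - R^+(h_n,d) \geq \exp\{-\Theta(n/d)\}$, so it suffices to exhibit a rational approximant of $\MAJ_n$ of degree $d$ with error $\epsilon := \tfrac12 \exp\{-\Theta(n/d)\}$. Since $\log(1/\epsilon) = \Theta(n/d)$ and therefore $\log(2n/\log(1/\epsilon)) = \Theta(\log d) = O(\log n)$, Theorem~\ref{thm:approx-maj} yields $\rdeg_\epsilon(\MAJ_n) = \Theta(n\log d / d) = O(n\log n / d)$; for $d = C\sqrt{n\log n}$ and $C$ large enough, this is at most $d$. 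Hence $R^+(h_n,d) + R^+(\MAJ_n,d) < 1$, and Theorem~\ref{thm:rational-is-possible} concludes $\degthr(h_n\wedge \MAJ_n) \leq 2d = O(\sqrt{n\log n})$.

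For the lower bound, I would write $d = \degthr(h_n\wedge \MAJ_n)$ and apply Theorem~\ref{thm:sherstov-degthr-R} to obtain $R^+(h_n,4d) + R^+(\MAJ_n,2d) < 1$. The upper bound half of Theorem~\ref{thm:R-hn} then forces $R^+(\MAJ_n,2d) < 1 - R^+(h_n,4d) \leq \exp\{-\Theta(n/d)\} =: \epsilon$. By Theorem~\ref{thm:approx-maj}, this requires $2d \geq \rdeg_\epsilon(\MAJ_n) = \Omega(n\log d / d)$, i.e., $d^2 = \Omega(n\log d)$. This inequality already forces $d = \Omega(\sqrt n)$ (taking $\log d \geq 1$), so $\log d = \Omega(\log n)$, and reinserting yields $d = \Omega(\sqrt{n\log n})$, as desired.

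The only real difficulty is parameter bookkeeping: one must verify that $\epsilon = \exp\{-\Theta(n/d)\}$ stays inside the window $2^{-n} < \epsilon < 1/3$ in which Theorem~\ref{thm:approx-maj} provides its sharp estimate, and that the relevant values of $d$ lie inside the range $1 \leq d \leq \Theta(n)$ of validity of Theorem~\ref{thm:R-hn}. For $d$ near $\sqrt{n\log n}$ both constraints hold comfortably, and the extreme cases (very small or very large $d$) can be absorbed into the choice of implicit constants. The heavy lifting has already been done in Theorems~\ref{thm:R-hn} and~\ref{thm:approx-maj}; Theorem~\ref{thm:hmajn} is a clean corollary.
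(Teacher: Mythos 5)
Your proposal is correct and follows exactly the same route as the paper, which proves this theorem as a one-line corollary of Theorems~\ref{thm:approx-maj}, \ref{thm:rational-is-possible}, \ref{thm:sherstov-degthr-R}, and~\ref{thm:R-hn}; you have simply spelled out the parameter bookkeeping that the paper labels ``immediate.'' One small slip in exposition: in your lower-bound argument you invoke the ``upper bound half'' of Theorem~\ref{thm:R-hn}, but the inequality $R^+(h_n,4d)\geq 1-\exp\{-\Theta(n/d)\}$ that you actually use is the \emph{lower} bound on $R^+$ (the hardness direction); the calculation itself is right.
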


\begin{proof}
The lower bound in (\ref{eqn:degthr-h-maj}) is immediate from
Theorems~\ref{thm:approx-maj}, \ref{thm:sherstov-degthr-R}, 
and~\ref{thm:R-hn}.  The upper bound in (\ref{eqn:degthr-h-maj})
is immediate from Theorems~\ref{thm:approx-maj},
\ref{thm:rational-is-possible}, and~\ref{thm:R-hn}.
\end{proof}

\begin{remark}
\label{rem:h-majn}
The construction of Theorem~\ref{thm:hmajn} is essentially best
possible in that every sequence of halfspaces
$h_n\colon\zoon\to\moo,$
$n=1,2,3,\dots,$ obeys 
\begin{align}
\degthr(h_n\wedge\MAJ_n)=O(\sqrt n\log n).
\label{eqn:general-hmajn-upper}
\end{align}
To derive this upper bound, recall that
$R^+(h_n,1)<1-\exp\{-\Theta(n\log n)\}$ for every halfspace
$h_n\colon\zoon\to\moo,$ by a classical result due to
Muroga~\cite{muroga71threshold}. Since
$R^+([-1,-\xi]\cup[\xi,1],d)<1-\xi^{\Theta(1/d)}$ for
$0<\xi<1$ by Newman~\cite{newman64rational}, we obtain by composition
of approximants that $R^+(h_n,d)<1-\exp\{-\Theta(\{n\log n\}/d)\}.$
This settles (\ref{eqn:general-hmajn-upper}) in view of
Theorems~\ref{thm:approx-maj}
and~\ref{thm:rational-is-possible}.
\end{remark}

\subsection*{Threshold density.}
In addition to threshold degree, several other
complexity measures are of interest when
sign-representing Boolean functions by real
polynomials.  One such complexity measure is
\emph{density,} i.e., the least $k$ for which a given
function can be sign-represented by a linear
combination of $k$ parity functions.  Formally, for a
given function $f\colon\zoon\to\moo,$ the
\emph{threshold density} $\dns(f)$ is the minimum size
$\abs{\Scal}$ of a family
$\Scal\subseteq\Pcal(\oneton)$ such that
\begin{align*}
f(x) \equiv \sign\PARENS{\sum_{S\in\Scal}^{\phantom{S\in\Scal}}
\lambda_S \chi_S(x)}
\end{align*}
for some reals $\lambda_S,$ $S\in\Scal.$ It is clear
from the definition that $\dns(f)\leq 2^n$ for all
functions $f\colon\zoon\to\moo,$ and we will show that
the intersection of two halfspaces on $\zoon$ has
threshold density $2^{\Theta(n)}.$ 

\newcommand{\op}{\text{\rm KP}}
To this end, we recall an elegant technique for converting
Boolean functions with high threshold degree into Boolean
functions with high threshold density, due to Krause and
Pudl{\'a}k~\cite[Prop.~2.1]{krause94depth2mod}.  Their
construction sends a function $f\colon\zoon\to\moo$ to the
function $f^\op\colon(\zoon)^3\to\moo$ given by 
\begin{align*}
f^\op(x,y,z) =
f(\dots, (\overline{z_i}\wedge x_i)\vee(z_i\wedge y_i), \dots).
\end{align*}

\begin{theorem}[\rm Krause and Pudl{\'a}k]
\label{thm:degree-length}
Every function $f\colon\zoon\to\moo$ obeys
\begin{align*}
 \dns(f^\op) \geq 2^{\degthr(f)}.
\end{align*}
\end{theorem}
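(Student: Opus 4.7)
The plan is to prove the contrapositive: given any sign-representation $f^{\op}(x,y,z)\equiv\sign\bigl(\sum_{S\in\Scal}\lambda_S\chi_S(x,y,z)\bigr)$ with $|\Scal|=D$, I will construct a polynomial of degree at most $\log_2 D$ that sign-represents $f$, yielding the equivalent inequality $\degthr(f)\leq\log_2\dns(f^{\op})$.

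The pivotal reduction is the substitution $x_i=y_i=u_i$. Decomposing each character on $(\zoon)^3$ into its natural tensor form $\chi_S(x,y,z)=\chi_{A_S}(x)\chi_{B_S}(y)\chi_{C_S}(z)$, this substitution collapses the selector $(\overline{z_i}\wedge x_i)\vee(z_i\wedge y_i)$ to $u_i$ irrespective of $z_i$. Hence $f^{\op}(u,u,z)=f(u)$ for every $u,z\in\zoon$, and the sign-representation gives the uniform inequality
\[
f(u)\cdot\sum_{S\in\Scal}\lambda_S\,\chi_{A_S\oplus B_S}(u)\,\chi_{C_S}(z)\;>\;0\qquad\text{for all }u,z\in\zoon.
\]
Grouping the inner sum by $T=A_S\oplus B_S$, one rewrites it as $\sum_{T\in\mathcal{T}}\beta_T(z)\chi_T(u)$ with $|\mathcal{T}|\leq D$, where $\beta_T(z)=\sum_{S:A_S\oplus B_S=T}\lambda_S\chi_{C_S}(z)$. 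In particular, $f(u)$ is entirely determined by the signature $\sigma(u)=(\chi_T(u))_{T\in\mathcal{T}}\in\{\pm1\}^{|\mathcal{T}|}$, and for each fixed $z$ is realised as a linear threshold function of these $|\mathcal{T}|\leq D$ parities.

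The final, technically hardest step is Krause and Pudl\'ak's combinatorial passage from this density-$D$, halfspace-over-parities structure to a polynomial of degree $\log_2 D$. Exploiting the entire family of halfspaces $\bigl\{\sum_T\beta_T(z)\chi_T(u)\bigr\}_{z\in\zoon}$, rather than any single fixed one, their argument binary-encodes the $|\mathcal{T}|\leq D$ parity values into at most $\log_2 D$ low-degree combinations, which then lift to a polynomial sign-representation of degree $\log_2 D$. I expect this encoding step to be the main obstacle: a naive use of a single halfspace $\beta_T(z_0)$ gives only a density bound on $f$'s sign-representation (as the parity function shows, density $1$ is compatible with $\degthr=n$), so the argument must genuinely leverage the full $z$-freedom to force the degree down to the logarithm of the density.
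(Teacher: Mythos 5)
The paper itself does not prove this theorem; it cites Krause--Pudl\'ak. Comparing your proposal with the actual argument, there is a genuine gap, and it begins with your choice of substitution. Setting $x=y=u$ throws away the very freedom the multiplexer gives you: the parities appearing in $\sum_T\beta_T(z)\chi_T(u)$ are always the fixed family $\{A_S\oplus B_S : S\in\Scal\}$, independent of $z$, and varying $z$ only rescales the coefficients $\beta_T(z)$, never removes a high-degree parity. The ``binary encoding'' you invoke is a placeholder rather than an argument; no version of it can exist from this starting point, since (as you observe) a density-$1$ sign-representation by parities is compatible with threshold degree $n$, and the functions $\beta_T$ need not share a common zero --- $\beta_T$ may even be a nonzero constant.

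The argument that does work uses an \emph{asymmetric} substitution that leaves fresh randomness to average over. Fix $z$, set $x_i=u_i$ if $z_i=0$ and $y_i=u_i$ if $z_i=1$, and assign the remaining coordinate at each position $i$ (the one the multiplexer ignores) a fresh uniformly random bit $r_i$. Then $f^\op=f(u)$ identically in $r$, and averaging the representing polynomial over $r$ \emph{annihilates} every $\chi_S$ except those with $A_S\subseteq\{i:z_i=0\}$ and $B_S\subseteq\{i:z_i=1\}$, the survivors becoming $\lambda_S\chi_{C_S}(z)\chi_{A_S\cup B_S}(u).$ The survival condition is membership of $z$ in a subcube of codimension $|A_S\cup B_S|$ (empty if $A_S\cap B_S\neq\emptyset$). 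If $|\Scal|<2^{\degthr(f)},$ the at most $|\Scal|$ subcubes attached to monomials with $|A_S\cup B_S|\geq\degthr(f)$ each have size at most $2^{n-\degthr(f)}$ and therefore jointly miss some $z^*\in\zoon$; with that $z^*$, the averaged polynomial has degree $<\degthr(f)$ yet still sign-represents $f$ (its sign is preserved by the averaging), a contradiction. The two ingredients you are missing --- the fresh-randomness substitution and the union bound over subcubes --- are what convert a density bound into a degree bound; no encoding step is involved.
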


\noindent
We are now in a position to obtain the desired density results.

\begin{theorem}
\label{thm:dns}
A family of halfspaces $h_n\colon\zoon\to\moo,$
$n=1,2,3,\dots,$
exists such that
\begin{align}
\dns(h_n\wedge h_n) &\geq \exp\{\Theta(n)\},
\label{eqn:hh}\\
\dns(h_n\wedge \MAJ_n) &\geq \exp\{\Theta(\sqrt {n\log n})\}.
\label{eqn:hmajn}
\end{align}
\end{theorem}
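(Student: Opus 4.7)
The plan is to deduce the density lower bounds of Theorem~\ref{thm:dns} from the threshold degree lower bounds already established in Theorems~\ref{thm:main-detailed} and~\ref{thm:hmajn}, using the Krause--Pudl\'ak transformation (Theorem~\ref{thm:degree-length}) as the bridge between the two notions.

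For the bound (\ref{eqn:hh}), I would start with the halfspace $\bar h$ furnished by Theorem~\ref{thm:main-detailed}, so that $\degthr(\bar h\wedge\bar h)\ge\beta n$ for a fixed $\beta>0$. Applying Theorem~\ref{thm:degree-length} with $f=\bar h\wedge\bar h$ gives $\dns((\bar h\wedge\bar h)^{\op})\ge 2^{\beta n}$. Since the Krause--Pudl\'ak operation commutes with conjunction on disjoint variable sets, one has $(\bar h\wedge\bar h)^{\op}=\bar h^{\op}\wedge\bar h^{\op}$. What remains is to realize this function as $h_n\wedge h_n$ for an honest halfspace $h_n$ on $\zoo^{\Theta(n)}$, so that the density bound transfers to the desired form.

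For the bound (\ref{eqn:hmajn}), the argument is entirely parallel: start with the halfspace from Theorem~\ref{thm:hmajn}, so that $\degthr(h\wedge\MAJ_n)=\Theta(\sqrt{n\log n})$, and apply Theorem~\ref{thm:degree-length} to obtain a density lower bound for $(h\wedge\MAJ_n)^{\op}=h^{\op}\wedge\MAJ_n^{\op}$. Here $\MAJ_n^{\op}$ remains a symmetric monotone function of $3n$ bits and is therefore naturally paired with the multiplexed halfspace $h^{\op}$ to form the required intersection.

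The main obstacle I anticipate is that the Krause--Pudl\'ak transform of a halfspace $\bar h(u)=\sign(\sum_i\alpha_iu_i-\theta)$ equals $\bar h^{\op}(x,y,z)=\sign(\sum_i\alpha_i[(1-z_i)x_i+z_iy_i]-\theta)$, which is bilinear in its inputs and hence not itself a halfspace. The crux of the proof is thus to realize $\bar h^{\op}\wedge\bar h^{\op}$ in the form $h_n\wedge h_n$ for a genuine halfspace $h_n$, without losing the $2^{\Theta(n)}$ density lower bound. I expect this to require either an enlargement of the variable set that linearizes the multiplexer gadget $v_i=(\overline{z_i}\wedge x_i)\vee(z_i\wedge y_i)$ by adjoining auxiliary bits for the bilinear products $x_iz_i,y_iz_i$, or an alternative argument that derives the density bound directly from a sparse-rational-approximation analog of Theorem~\ref{thm:sherstov-degthr-R} tailored to the halfspaces of Theorem~\ref{thm:main-detailed}.
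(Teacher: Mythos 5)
Your high-level plan---lower-bound the density via the Krause--Pudl\'ak transform of the halfspaces from Theorems~\ref{thm:main-detailed} and~\ref{thm:hmajn}---matches the paper exactly, and you correctly pinpoint the obstacle: $\bar h^{\op}$ is the sign of a degree-two polynomial in $(x,y,z)$, hence not a halfspace, so the density bound on $\bar h^{\op}\wedge\bar h^{\op}$ does not by itself produce the required family $h_n\wedge h_n$. Where you stop short is exactly the step that closes this gap, and the two directions you sketch do not, as stated, do the job.

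Consider your first suggestion, adjoining auxiliary bits for the products $x_iz_i,y_iz_i$. This does linearize the multiplexer and produces a halfspace $h'$ on $O(n)$ variables with $\bar h^{\op}(x,y,z)=h'(x,y,z,xz,yz)$, but the substitution $u_i\mapsto x_iz_i$ is an AND, not an XOR. Consequently a single parity character of $h'$ that involves an auxiliary bit $u_i$ becomes, after substitution, $(-1)^{x_iz_i}=\tfrac12\bigl(1+(-1)^{x_i}+(-1)^{z_i}-(-1)^{x_i+z_i}\bigr)$, i.e.\ a linear combination of four characters in the original variables. A sparse sign-representation of $h'\wedge h'$ therefore need not yield a sparse one for $\bar h^{\op}\wedge\bar h^{\op}$---the number of parities can blow up by a factor exponential in $n$---so the lower bound on $\dns(\bar h^{\op}\wedge\bar h^{\op})$ does not transfer to $\dns(h'\wedge h')$ this way. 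Your second suggestion (a sparse-rational analogue of Theorem~\ref{thm:sherstov-degthr-R}) is a genuinely different route that the paper does not take, and you give no indication of how to carry it out.

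The paper's resolution is cleaner and relies on two observations you do not make. First, expanding the degree-two polynomial $\sum_i\alpha_i[(1-z_i)x_i+z_iy_i]-\theta$ in the Fourier basis shows that $f_n^{\op}$ has threshold \emph{density} at most $4n+1$: it is $\sign\bigl(\sum_{S\in\Scal}\lambda_S\chi_S\bigr)$ for a family $\Scal$ of size $O(n)$, each $S$ of cardinality at most two. Second---and this is the crux---the parity of several parities is again a parity, so replacing each $\chi_S$, $S\in\Scal$, by a fresh input bit yields a genuine halfspace $h$ on $O(n)$ variables such that any sign-representation of $h\wedge h$ by $k$ parities pulls back, via the substitution $y_S\mapsto\chi_S$, to a sign-representation of $f_n^{\op}\wedge f_n^{\op}$ by at most $k$ parities. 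In short, $\dns(h\wedge h)\ge\dns(f_n^{\op}\wedge f_n^{\op})\ge 2^{\degthr(f_n\wedge f_n)}\ge\exp\{\Omega(n)\}$. So the fresh variables you should introduce are not the AND products $x_iz_i$ but the parity characters such as $(-1)^{x_i+z_i}$ appearing in the Fourier expansion; that distinction is precisely what makes the density bound survive the change of variables. The proof of (\ref{eqn:hmajn}) is identical with Theorem~\ref{thm:hmajn} in place of Theorem~\ref{thm:main-detailed}.
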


\begin{proof}
The parity of several parity functions is another parity
function. As a result,
\begin{align}
\max_{h_n}\{\dns(h_n\wedge h_n)\} \geq \max_F \{\dns(F\wedge F)\},
\label{eqn:h-F}
\end{align}
where the maximum on the left is over all halfspaces
$h_n\colon\zoon\to\moo$ and the maximum on the right is over
arbitrary functions $F\colon\zoo^m\to\moo$ (for arbitrary $m$)
such that $\dns(F)\leq n.$ For each $n=1,2,3,\dots,$
Theorem~\ref{thm:main-detailed} ensures the existence of a
halfspace $f_n\colon\zoon\to\moo$ with $\degthr(f_n\wedge
f_n)\geq\Omega(n).$ By Theorem~\ref{thm:degree-length}, the
function $(f_n\wedge f_n)^\op={f_n}^\op\wedge {f_n}^\op$ has
threshold density $\exp\{\Omega(n)\}.$ Since $\dns({f_n}^\op)\leq
4n+1,$ the right member of (\ref{eqn:h-F}) is at least
$\exp\{\Omega(n)\}.$ 

This completes the proof of (\ref{eqn:hh}).
The proof of (\ref{eqn:hmajn}) is closely analogous, with
Theorem~\ref{thm:hmajn} used instead of
Theorem~\ref{thm:main-detailed}.
\end{proof}

The lower bounds in Theorem~\ref{thm:dns} are essentially
optimal. Specifically, (\ref{eqn:hh}) is tight for
trivial reasons, whereas the lower bound (\ref{eqn:hmajn}) nearly
matches the upper bound of $\exp\{\Theta(\sqrt n\log^2n)\}$ that
follows from (\ref{eqn:general-hmajn-upper}).

We also note that 
Theorem~\ref{thm:degree-length}
readily generalizes to linear combinations of
conjunctions rather than parity functions. In
other words, if a function $f\colon\zoon\to\moo$ has threshold
degree $d$ and $f^\op(x,y,z)\equiv \sign(\sum_{i=1}^N\lambda_i
T_i(x,y,z))$ for some conjunctions $T_1,\dots,T_N$ of
the literals $x_1,y_1,z_1,\dots,x_n,y_n,z_n,$ $\neg x_1,\neg
y_1,\neg z_1,\dots,\neg x_n,\neg y_n,\neg z_n,$ then $N\geq
2^{\Omega(d)}.$ With this remark in mind,
Theorem~\ref{thm:dns} and its proof readily carry over to this
alternate definition of density.


\section*{Acknowledgments}

The author is thankful to Adam Klivans, Ryan O'Donnell,
Rocco Servedio, and the anonymous reviewers for their
feedback on this manuscript.

{
\renewcommand{\baselinestretch}{0.95}
    \small
\bibliographystyle{abbrv}
\bibliography{%
/Users/sasha/bib/general,%
/Users/sasha/bib/fourier,%
/Users/sasha/bib/cc,%
/Users/sasha/bib/learn,%
/Users/sasha/bib/my}
}

\end{document}